\title{Bandwidth vs BFS Width in Matrix Reordering, Graph Reconstruction, and Graph Drawing}
\titlerunning{Bandwidth vs BFS Width}
\author{David Eppstein}{University of California, Irvine, USA}{eppstein@uci.edu}{}{Supported by NSF grant 2212129, as well for Michael Goodrich and Songyu (Alfred) Liu.}
\author{Michael T. Goodrich}{University of California, Irvine, USA}{goodrich@uci.edu}{https://orcid.org/0000-0002-8943-191X}{}
\author{Songyu (Alfred) Liu}{University of California, Irvine, USA}{songyul4@uci.edu}{https://orcid.org/0009-0003-1255-7156}{}
\authorrunning{D.~Eppstein, M.T.~Goodrich, and S.~Liu}
\keywords{Graph algorithms, graph theory, graph width, bandwidth, treewidth} 
\renewcommand{\emph}[1]{\textbf{\textit{#1}}}
\newcommand{\bfsw}{{\mathbf{bfsw}}}
\newcommand{\bw}{{\mathbf{bw}}}
\newcommand{\tw}{{\mathbf{tw}}}
\newcommand{\polylog}{{\mathrm{polylog}}}
\begin{document}

\maketitle

\begin{abstract}
We provide the first approximation quality guarantees for the Cuthull-McKee heuristic for reordering symmetric matrices
to have low bandwidth, and we provide an algorithm for reconstructing bounded-bandwidth
graphs from distance oracles with near-linear query complexity. To prove these results we introduce a new width parameter, BFS width,
and we prove polylogarithmic upper and lower bounds on the BFS width of graphs of bounded bandwidth.
Unlike other width parameters, such as bandwidth, pathwidth, and treewidth, BFS width can easily be computed in polynomial time.
Bounded BFS width implies bounded bandwidth, pathwidth, and treewidth, which in turn imply fixed-parameter tractable algorithms
for many problems that are NP-hard for general graphs. 
In addition to their applications to matrix ordering,
we also provide applications of BFS width to graph reconstruction,
to reconstruct graphs from distance queries,
and graph drawing, to construct arc diagrams of small height.
\end{abstract}

\section{Introduction}
When a graph has a linear layout of low bandwidth, it is natural
to guess that breadth-first search produces a low-width layout.
This is the underlying principle of the widely used Cuthill–McKee
algorithm~\cite{cuthill_reducing_1969} for reordering symmetric
matrices into band matrices with low bandwidth. In this paper, we
quantify this principle, for the first time, by providing  the first
worst-case analysis of the Cuthill–McKee algorithm, proving
polylogarithmic upper and lower bounds for its performance on
bounded-bandwidth graphs. Using the same techniques, we also study
the problem of reconstructing an unknown graph, given access to a
distance oracle. We use a method based on breadth-first search to
reconstruct graphs of bounded bandwidth in near-linear query
complexity. Additionally, we apply our results in graph drawing,
by constructing arc diagrams of low height for any graph of bounded
bandwidth.

The key to our results is a natural but previously-unstudied graph width parameter, the \emph{breadth-first-search width} or \emph{BFS width} of a graph, which we define algorithmically in terms of the size of the layers in a breadth-first search tree. To develop our results, we prove that bandwidth and BFS width are polylogarithmically tied: the graphs of bandwidth $\le b$, for any constant $b$, have BFS width that is both upper bounded (for all graphs of bandwidth $\le b$) and lower bounded (for infinitely many graphs of bandwidth $b$) by functions of the form $\log^{f(b)}n$.
To provide context for our contributions, 
we review the background of our three application areas below and
then describe our new results in more detail for each.

\paragraph*{Sparse Numerical Linear Algebra}
A symmetric matrix, $A$, is called a \emph{band matrix} if its
nonzeros are confined to a \emph{band} of consecutive diagonals
centered on the main diagonal; the \emph{bandwidth} is the number
of these nonzero diagonals on either side of the main diagonal.
That is, it is the maximum value of $|i-j|$ for the indexes of a
nonzero matrix entry $A_{i,j}$. Any symmetric matrix corresponds
to an undirected graph, with $n$ numbered vertices, and with an
edge $\{i,j\}$ for each off-diagonal nonzero entry $A_{i,j}$; the
numbering of vertices describes a \emph{linear layout} of this graph
and the bandwidth of this layout is the largest difference between
the positions of any two adjacent vertices in this linear layout.
The bandwidth of the graph itself is the minimum bandwidth of any
of its linear layouts; the goal of the Cuthill–McKee algorithm is
to permute the vertices of the graph into a layout with bandwidth
close to this minimum, and correspondingly to permute the rows and
columns of the given matrix to produce an equivalent symmetric band
matrix of low bandwidth. 
See Figure~\ref{fig:cuthill}.
Such a reordering, for instance, can be
used in sparse numerical linear algebra to reduce fill-in (zero
matrix elements that become nonzero) when applying Gaussian elimination
to the matrix.

\begin{figure}[hbt]
\centering
\begin{tabular}{c@{\hspace*{.5in}}c}
\fbox{\includegraphics[width=2.14in]{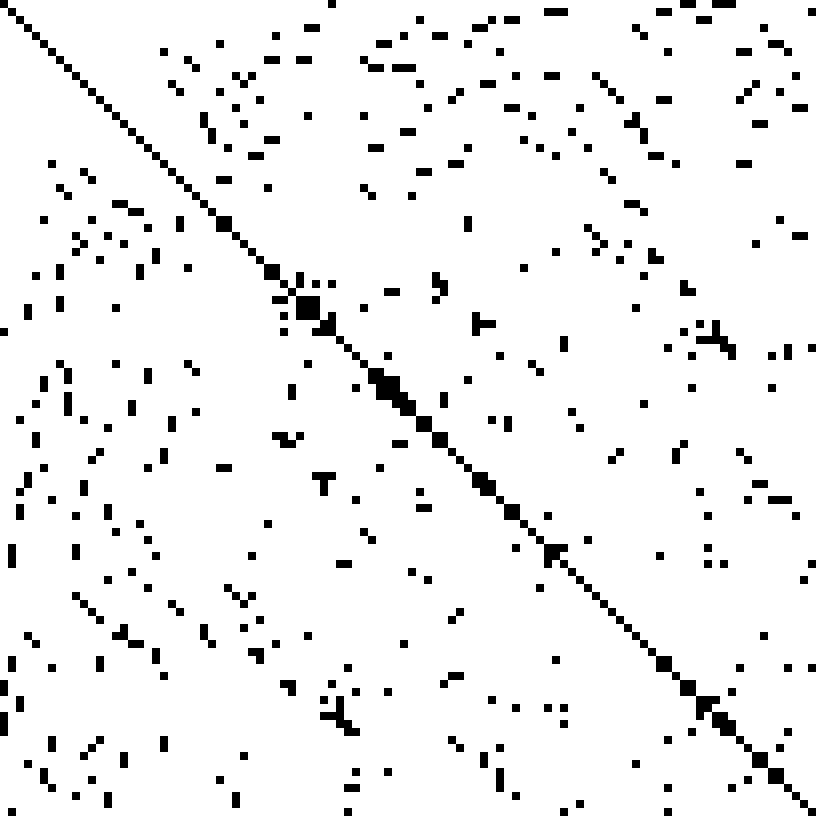}} &
\includegraphics[width=2.25in, trim=.8in 1.8in 0in 0.1in, clip]{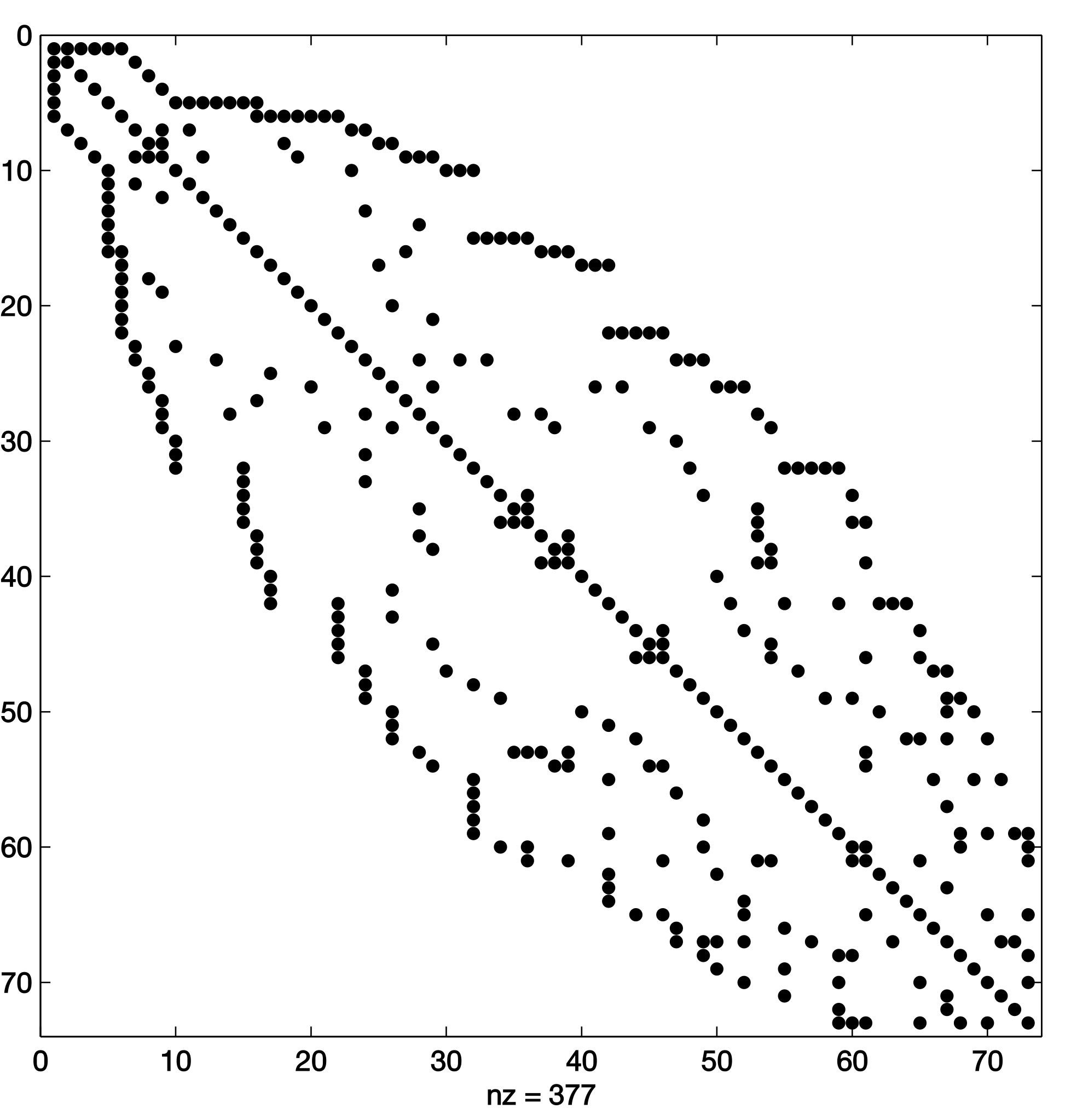} \\
(a) & (b)
\end{tabular}
\caption{Illustrating the Cuthill-McKee algorithm. (a) A sparse matrix;
(b) a sparse matrix reordered according to the 
Cuthill-McKee algorithm. 
Nonzero entries are shown as black spots.
Left: public-domain image by Oleg Alexandrov;
right: image by Wikipedia user Kxx licensed under the CC BY-SA 3.0 license.
\label{fig:cuthill}
}
\end{figure}

Graph bandwidth is $\mathsf{NP}$-complete~\cite{GarJoh-79} and hard to approximate to within any constant factor, even for trees of a very special form~\cite{dubey_hardness_2011}.  For any fixed $b$, testing whether bandwidth $\le b$ is polynomial, but with $b$ in the exponent of the polynomial~\cite{GurSud-JAlgs-84}. Therefore, practitioners have resorted to heuristics for bandwidth, rather than exact optimization algorithms. Notable among these are the Cuthill–McKee algorithm~\cite{cuthill_reducing_1969}, and the closely related reverse Cuthill–McKee algorithm~\cite{GeoLiu-81}; both are heavily cited and widely used for this task. More sophisticated methods have also been used and tested experimentally~\cite{SafRonBra-JEA-09}.

The best known polynomial time algorithm provides a $ O(\log^3 n \sqrt{\log \log n})$-approximation factor for bandwidth on general graphs with high probability using a semi-definite relaxation \cite{dunagan_euclidean_2001}. 
A randomized polynomial time  $O(\log^{2.5} n)$-approximation algorithm is known for the special cases of trees and chordal graphs using caterpillar decomposition~\cite{gupta_improved_2001}. 
 In contrast, the Cuthill–McKee and reverse Cuthill–McKee algorithms are much simpler since they only rely on BFS and are deterministic.
Despite their simplicity, we are unaware of any past work providing worst-case performance guarantees for  the Cuthill–McKee and reverse Cuthill–McKee algorithms. 
The only theoretical upper bound we are aware of regarding these algorithms concerns random matrices whose optimal bandwidth is sufficiently large, for which this algorithm will produce a constant factor approximation. The same work also provides a logarithmic lower bound on the approximation factor for matrices with optimal bandwidth 2, weaker than our result~\cite{turner_probable_1986}. 
Instead, we are interested in deterministic guarantees for these algorithms for matrices of low optimal bandwidth. We prove the following results:
\begin{itemize}
\item For $n\times n$ matrices of bounded optimal bandwidth $\le b$, we show
that the Cuthill–McKee algorithm and the reverse Cuthill–McKee algorithm will find a reordering of bandwidth upper bounded by a function of the form $\log^{f(b)} n$. Equivalently, for $n$-vertex graphs of bandwidth $\le b$, these algorithms will find a linear layout of bandwidth $\le \log^{f(b)} n$.
\item We show that there exist matrices for which this polylogarithmic dependence is necessary. For every polylogarithmic bound $\log^k n$ on the bandwidth, there is a constant bandwidth bound $b$ such that, for infinitely many $n$, there exist matrices of optimal bandwidth $\le b$ that cause both the Cuthill–McKee algorithm and the reverse Cuthill–McKee algorithm to produce reordered matrices of bandwidth $\Omega(\log^k n)$.
\end{itemize}

We remark that, for any graph, 
\[
\mbox{
$O(1)$ BFS width $\implies$ 
$O(1)$ bandwidth $\implies$ $O(1)$ pathwidth $\implies$ $O(1)$ 
treewidth~\cite{KapSha-SICOMP-96}, 
}
\]
and graphs of 
bandwidth $b$ have tree-depth $O(b\log n/b)$~\cite{Gru-JComb-12}. 
Thus, many algorithmic results based on width parameters such as
pathwidth, treewidth, and tree-depth are also available for BFS-width.
However, both pathwidth and tree-depth have approximation algorithms
with fixed polylogarithmic approximation ratios~\cite{FeiHajLee-STOC-05,BodGilHaf-JAlg-95}, better than the
polylogarithmic ratio with a variable exponent depending on the
optimal bandwidth that we prove for the Cuthill–McKee algorithm.

\paragraph*{Graph Reconstruction}

We also provide new results for \emph{graph reconstruction}, a well-motivated problem with numerous applications, such as learning road networks \cite{afshar_efficient_2022} and communication network mapping \cite{afshar_mapping_2022}. Suppose we are given the vertex set but not the edge set of a graph $G$.
The \emph{graph reconstruction} problem 
is to determine all the edges of $G$, by asking queries about $G$ from an oracle, with the goal of minimizing the \emph{query complexity}, the number of queries to the oracle \cite{kannan_graph_2018, mathieu_simple_2021}. Various oracles can be used. In particular, we consider the \emph{distance oracle} from several previous works \cite{kannan_graph_2018, mathieu_simple_2021}. This oracle takes a pair of vertices and returns the number of edges on a shortest path between the two given vertices. Following previous work~\cite{kannan_graph_2018, mathieu_simple_2021}, the graph is assumed to be connected, undirected, and unweighted. It is also generally assumed to have bounded degree, because distinguishing an $n$-vertex star from a graph with one additional edge between the leaves of the star would have a trivial quadratic lower bound. For reconstruction from distance oracles, the following results are known:
\begin{itemize}
\item For graphs of maximum degree $\Delta$, there is a randomized algorithm with query complexity $O (\Delta^3 \cdot n^{3/2} \cdot \log^2 n \cdot \log \log n)$, which is $\tilde{O} (n^{3/2})$ when $\Delta = O\bigl(\polylog(n)\bigr)$~\cite{kannan_graph_2018}.\footnote{We use $\tilde{O}(*)$ to denote
  asymptotic bounds that ignore polylogarithmic factors.}
\item Chordal graphs can be reconstructed in randomized query complexity $\tilde{O}(n)$ when $\Delta=O(\log\log n)$~\cite{kannan_graph_2018}. The $k$-chordal graphs can be reconstructed in randomized query complexity  $O_{\Delta,k}(n \log n)$ when  $\Delta = O(1)$ \cite{bastide_optimal_2024}. 
\item Outerplanar graphs can be reconstructed in randomized query complexity $\tilde{O}(n)$ when $\Delta$ is polylogarithmic~\cite{kannan_graph_2018}.
\item For reconstructing uniformly random $\Delta$-regular graphs with maximum degree $\Delta = O(1)$, there is a randomized algorithm using $\tilde{O}(n)$ queries in expectation~\cite{mathieu_simple_2021}.
\end{itemize}
Our new results in this area are that the graphs of bounded bandwidth can be reconstructed deterministically in query complexity $\tilde{O}(n)$, by a very simple algorithm. More generally, the graphs of BFS width $B$ can be reconstructed deterministically in query complexity $O(nB)$. No additional assumption on $\Delta$ is needed here because bounded bandwidth and bounded BFS width both automatically imply bounded degree.

\paragraph*{Graph Drawing}

Graph drawing is the study of methods for visualizing 
graphs; see, e.g.,~\cite{battista_graph_1998,komarek}. 
One type of graph visualization tool that has received considerable attention
is the \emph{arc diagram} of a graph, 
$G=(V,E)$, 
where the vertices of $G$ 
are laid out as points on a straight line and edges are drawn 
as semicircular arcs or straight-line segments 
(for consecutive points) joining pairs of such points.
See Figure~\ref{fig:arcs}.

\begin{figure}[hbt]
\centering
\includegraphics[width=2.5in]{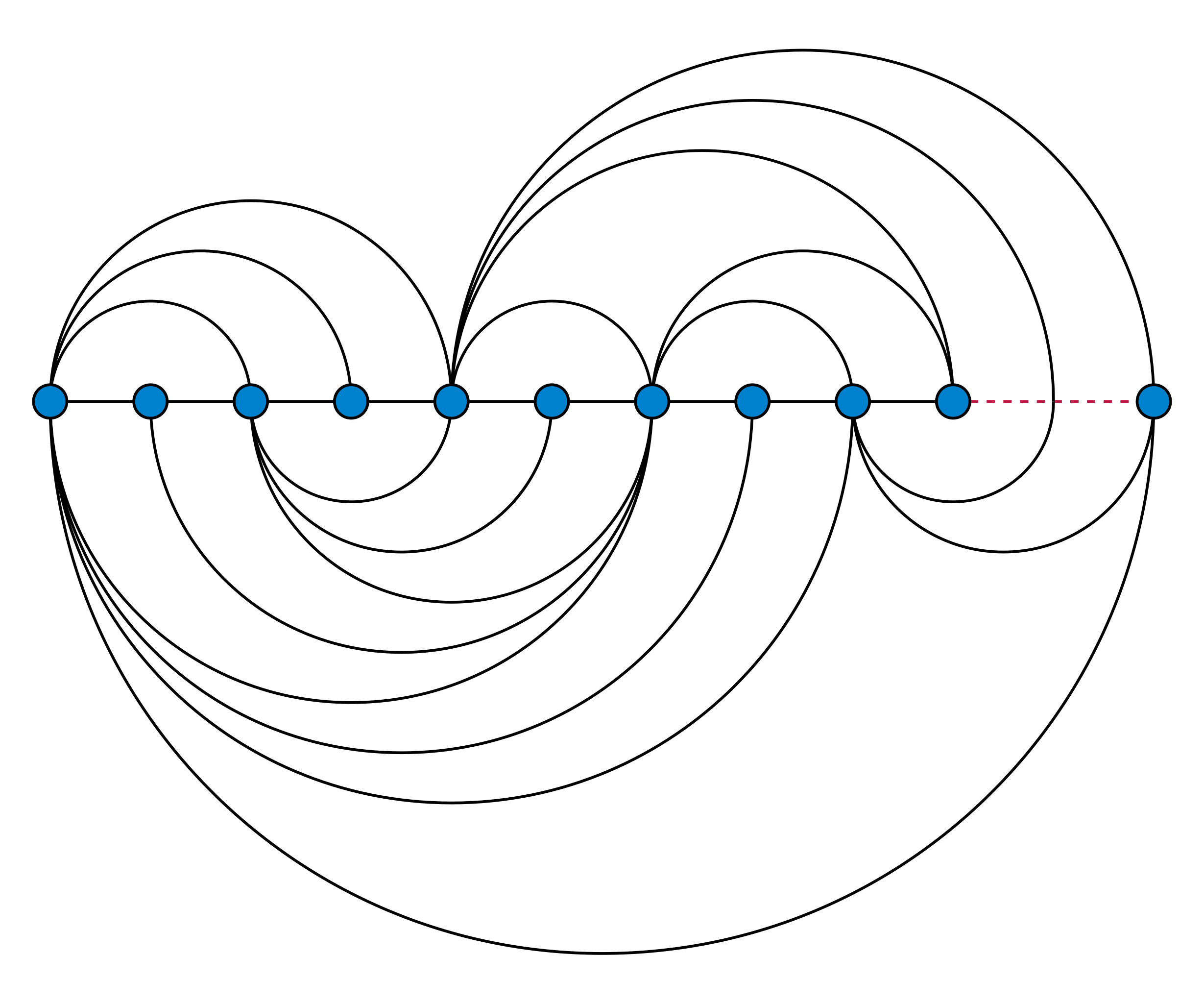} 

\caption{An example arc diagram.
Public-domain image by David Eppstein.
\label{fig:arcs}
}
\end{figure}

Arc diagrams have received considerable attention, both as a visualization
tool and as a topic of study in discrete mathematics; see,
e.g.,~\cite{masuda_crossing_1990,wattenberg_arc_2002,cardinal_et_al,burch2021dynamic,debiasi,komarek}. As a direct consequence of \cref{theorem:upper}, when an $n$-vertex graph has an arc diagram of bounded height, we can find a drawing with polylogarithmic height.
Here, the height of an arc diagram, with vertices placed at integer positions along the $x$-axis, is just half the bandwidth of the linear layout describing the vertex placement.

\section{Preliminaries}

We define BFS width as follows.
From a given undirected graph, $G=(V,E)$, choose arbitrarily a starting vertex $v$ in $G$
and perform a breadth-first search of $G$ from $v$. 
The edges traversed during this search define a breadth-first search tree $T$ such that the depth of each
vertex in $T$ equals its unweighted distance from $v$, the number of edges in a shortest path from $v$.
We define the \emph{layer} $L_i(v)$ to consist of all vertices at distance $i$ from $v$; if $u \in L_i(v)$, we say that $u$ has \emph{layer number} $i$.
The maximum size of any $L_i(v)$ set is the BFS width of $G$ from $v$
and the BFS width of $G$ is the maximum BFS width taken over all choices of the starting vertex $v$ in $G$.
That is, we have the following formal definition of BFS width building upon the definition of width in layered graph drawing \cite{battista_graph_1998}.

\begin{definition}[BFS Width]
    Let $G = (V, E)$ be a graph. A \emph{layering} of $G$ is a partition of $V$ into subsets $L_0, L_1, \dots, $ and the \emph{width} of $G$ for this layering is $\max_{i \in \mathbb{Z}_{\geq 0}} |L_i|$. Any vertex, $v\in V$, defines a layering by letting $L_i(v) = \{u \in V : d(v, u) = i\}$.
The \emph{BFS width of $G$ from $v$}, denoted $\bfsw(G, v)$, is defined as $\bfsw(G, v) = \max_{i \in \mathbb{Z}_{\geq 0}} |L_i(v)|$.
    The \emph{BFS width of $G$}, denoted $\bfsw(G)$, is defined as 
$\bfsw(G) = \max_{v \in V} \bfsw(G, v) = \max_{v \in V}  \max_{i \in \mathbb{Z}_{\geq 0}} |L_i(v)|$.
\end{definition} 

We also define the \emph{minimum BFS width},
$\bfsw_{\min}(G)=\min_{v\in V} \bfsw(G,v)$. These two BFS width parameters can differ significantly, and we will show that there exist graphs such that $\bfsw(G) / \bfsw_{\min}(G) = \Omega(\log n)$. 
Clearly, $\bfsw(G)$ and $\bfsw_{\min}(G)$
can both easily be computed in $O\bigl(n(n+m)\bigr)$ time
for any graph, $G$, with $n$ vertices and $m$ edges 
by performing $n$ breadth-first searches; 
see, e.g.,~\cite{Cormen2022Algorithms,goodrich}.
In contrast, it is NP-hard to compute other well-known width
parameters for a graph, including bandwidth~\cite{bandwidth2},
pathwidth, and treewidth~\cite{bodlaender_efficient_1996}.
Nevertheless, we prove that bounded BFS width 
implies bounded bandwidth for the graph, 
which, in turn, implies bounded pathwidth and treewidth for the graph.

\subsection*{Some Preliminary Facts About Bandwidth and BFS Width}
\paragraph*{Bandwidth}
Two lower bounds on bandwidth are as follows:
\begin{claim}[Degree lower bound  \cite{bottcher_bandwidth_2010}]
    $\bw(G) \geq \lceil\Delta/2\rceil$, where $\Delta$ is the graph's maximum 
degree.
\end{claim}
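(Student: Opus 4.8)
The plan is to fix an optimal linear layout and argue locally about a maximum-degree vertex by a counting (pigeonhole) argument on the positions available to its neighbors. Let $b = \bw(G)$, and let $\pi \colon V \to \{1, \dots, n\}$ be a bijective linear layout achieving this optimum, so that $|\pi(u) - \pi(w)| \le b$ for every edge $\{u, w\} \in E$. Let $v$ be a vertex of degree $\Delta$, and write $p = \pi(v)$ for its position.

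The key observation is that every neighbor of $v$ must be placed within the window of positions at distance at most $b$ from $p$. Concretely, if $\{v, w\} \in E$ then $|\pi(w) - p| \le b$, so $\pi(w)$ lies in the set
\[
\{p - b, \dots, p - 1\} \cup \{p + 1, \dots, p + b\},
\]
which excludes $p$ itself (occupied by $v$) and therefore contains at most $2b$ distinct positions. Since $\pi$ is a bijection, the $\Delta$ neighbors of $v$ occupy $\Delta$ distinct positions among these, so $\Delta \le 2b$.

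Rearranging gives $b \ge \Delta/2$, and since $b = \bw(G)$ is an integer we may round up to obtain $b \ge \lceil \Delta/2 \rceil$, as claimed. The only subtlety worth flagging is that the bound $2b$ on the number of admissible neighbor positions could in principle be smaller near the ends of the layout (if $p \le b$ or $p > n - b$), but this only strengthens the inequality, so the worst case is the interior one used above; and the passage from $\Delta/2$ to $\lceil \Delta/2 \rceil$ is justified purely by the integrality of the bandwidth. There is no real obstacle here—the entire content is the pigeonhole step bounding the neighbor window by $2b$.
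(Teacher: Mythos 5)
Your proof is correct: the pigeonhole argument on the $2b$ positions available to the neighbors of a maximum-degree vertex, plus integrality of the bandwidth, is exactly the standard proof of this bound. Note that the paper itself does not prove this claim; it cites it and instead remarks that it is the special case $d=1$ of the local density lower bound $\bw(G) \ge \max_{v,d} \left\lceil |N(v,d)|/(2d) \right\rceil$, taking $v$ to be a vertex of maximum degree. Your argument is the same counting idea that underlies that more general bound: all vertices within graph distance $d$ of $v$ must land within $bd$ layout positions on either side of $v$, so $|N(v,d)| \le 2bd$. In other words, if you replace your window of radius $b$ by a window of radius $bd$, your proof yields the local density bound with no extra work, which is what the paper actually takes as its primitive.
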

\begin{claim}[Local density lower bound \cite{gupta_improved_2001}]
    $\bw(G) \ge D(G) \coloneq \max_{v \in V, d} \left\lceil \frac{|N(v, d)|}{2d} \right\rceil$, where $D(G)$ is the local density of $G$, and $N(v, d)$ is vertices at distance at most $d$ from $v$ (excluding $v$).
\end{claim}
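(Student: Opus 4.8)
The plan is to bound how far the vertices near $v$ can spread out in any linear layout, and to argue that a small bandwidth forces them to crowd into a short interval. I would fix an optimal layout, i.e., a bijection $\pi \colon V \to \{1, \dots, n\}$ realizing the bandwidth $b = \bw(G)$, so that $|\pi(x) - \pi(y)| \le b$ for every edge $\{x, y\}$.

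First I would observe that traversing a single edge moves us at most $b$ positions in the layout. Hence, if a vertex $u$ lies at graph distance $k$ from $v$, following a shortest $v$--$u$ path of length $k$ and summing the per-edge displacements gives $|\pi(u) - \pi(v)| \le kb$. In particular, every vertex of $N(v,d)$ is at distance at most $d$ from $v$, so all of them satisfy $|\pi(u) - \pi(v)| \le db$.

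The key step is the resulting packing bound: the set $N(v,d) \cup \{v\}$ is contained in the positions of the integer interval $[\pi(v) - db,\; \pi(v) + db]$, which holds exactly $2db + 1$ positions. Since $\pi$ is injective and $v$ itself occupies one of these positions, we get $|N(v,d)| + 1 \le 2db + 1$, i.e.\ $|N(v,d)| \le 2db$, and therefore $b \ge |N(v,d)|/(2d)$. As $b$ is an integer this strengthens to $b \ge \lceil |N(v,d)|/(2d)\rceil$. Taking the maximum over all choices of $v$ and $d$ then yields $\bw(G) \ge D(G)$.

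I do not expect any real obstacle here; the only point requiring a little care is the triangle-inequality-style argument that distance-$d$ vertices stay within $db$ positions, which rests on applying the bandwidth bound edge-by-edge along a shortest path rather than to the pair $(v, u)$ directly (which need not be adjacent).
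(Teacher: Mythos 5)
Your proof is correct: the edge-by-edge displacement bound along a shortest path, the resulting packing of $N(v,d)\cup\{v\}$ into the $2db+1$ positions of $[\pi(v)-db,\,\pi(v)+db]$, and the integrality step justifying the ceiling are all sound. The paper itself gives no proof of this claim (it is stated with a citation to the local density bound of Gupta), and your argument is precisely the standard one for this fact, so there is nothing to reconcile.
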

When $v$ is a maximum degree vertex and $d = 1$, this implies the degree lower bound.

\paragraph*{BFS Width}
Recall that a \emph{caterpillar tree} is a tree in which every 
vertex is within distance $1$ of a central path; i.e.,
the central path contains every vertex
of degree $2$ or more; see, e.g.,~\cite{HARARY1973359}.
Caterpillar trees have applications in chemistry and physics~\cite{el1987applications},
and they are the graphs of pathwidth one~\cite{proskurowski1999classes}.
A simple, but useful, fact regarding BFS width is the following:

\begin{theorem}
If $G$ is a caterpillar tree with maximum degree, $\Delta \geq 2$,
then $\bfsw(G)\le 2(\Delta-1)$.
% and $\bfsw_{\min}(G) \le \Delta$.
\end{theorem}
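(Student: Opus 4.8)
The plan is to bound $|L_i(v)|$ directly for every starting vertex $v$ and every layer index $i$, showing each is at most $2(\Delta-1)$. First I would fix the \emph{spine} $p_1, p_2, \dots, p_k$ of the caterpillar, the path obtained by deleting all leaves, so that every vertex of $G$ is either a spine vertex $p_a$ or a leaf attached to some $p_a$. Writing $\ell_a$ for the number of leaves hanging off $p_a$, the degree bound gives $\ell_a \le \Delta - 2$ when $p_a$ is an interior spine vertex (it already has two spine neighbors) and $\ell_a \le \Delta - 1$ when $p_a$ is an endpoint $p_1$ or $p_k$ (only one spine neighbor). This asymmetry is the crux of the whole argument.

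Next I would compute layer sizes by tracking distances. For a start vertex $v = p_j$ on the spine, a spine vertex $p_a$ lies at distance $|a-j|$ and a leaf of $p_a$ at distance $|a-j|+1$; hence for $i \ge 2$ the layer $L_i(p_j)$ consists of the at most two spine vertices $p_{j\pm i}$ together with the leaves of $p_{j\pm(i-1)}$. I would then group these into a \emph{left} contribution (from positions $j-i$ and $j-(i-1)$) and a \emph{right} contribution (from $j+i$ and $j+(i-1)$), and prove the key inequality that each side contributes at most $\Delta-1$. The point is a trade-off: if the outer spine vertex $p_{j+i}$ exists then $p_{j+(i-1)}$ cannot be an endpoint (its index is strictly smaller, and for $i\ge 2$ it cannot wrap around to the other end), so it is interior with at most $\Delta-2$ leaves, giving $1+(\Delta-2)=\Delta-1$; and if $p_{j+i}$ does not exist, the spine contribution on that side is $0$ while the leaf count is at most $\Delta-1$ even at an endpoint. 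Summing the two sides yields $|L_i(p_j)| \le 2(\Delta-1)$.

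Finally I would dispatch the remaining cases. The layers $i=0,1$ are immediate ($|L_0|=1$ and $|L_1(p_j)| = \deg(p_j) \le \Delta \le 2(\Delta-1)$, using $\Delta\ge 2$), and the case where $v$ is a leaf of $p_j$ is the same analysis shifted by one, since then a spine vertex $p_a$ is at distance $1+|a-j|$ and a sibling leaf at distance $2$; the only genuinely new subcase, $i=2$, collapses to $\deg(p_j)-1 \le \Delta-1$. Taking the maximum over all $v$ and $i$ gives $\bfsw(G)\le 2(\Delta-1)$. I expect the main obstacle to be precisely the endpoint/boundary bookkeeping: because an endpoint can carry $\Delta-1$ leaves rather than $\Delta-2$, a naive count would overshoot, and the argument only closes because reaching those extra leaves forces the search to have already run off that end of the spine, so the missing outer spine vertex pays for the extra leaf. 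Making this cancellation rigorous across all boundary positions is where care is needed.
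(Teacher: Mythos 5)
Your proof is correct and rests on the same counting idea as the paper's: each layer receives contributions from the two directions along the spine, and on each side a surviving outer spine vertex forces the leaf-bearing vertex to be interior (at most $\Delta-2$ leaves), so each side contributes at most $\Delta-1$, for $2(\Delta-1)$ in total. Indeed, your version is more careful than the paper's short argument, which simply asserts that the maximizing root is a non-endpoint of the central path and that the worst layer has two degree-$\Delta$ spine vertices, whereas you verify the bound explicitly for every choice of root (interior spine vertex, spine endpoint, or leaf) and every layer index.
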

\begin{proof}
To upper bound $\bfsw(G)$, we want to find a root vertex $v$ that can maximize the BFS width of $G$ from $v$. Choose a vertex, $v$,  that is not one of the end points on the central path as the root for a BFS tree, $T$. 
Then layer 1 will have size at most $\Delta$ and each  
vertex of degree $\Delta$ in $G$ that is not $v$ will correspond to an interior vertex in $T$ with at most $\Delta-1$ children. All degree $\Delta$ vertices are on the central path.
The maximum layer size occurs in $T$ when two such vertices are in the same layer. The layer below it has  $2(\Delta-1)$ children.
\end{proof}

% In addition, an example of a maximal graph, $G$, with $\bfsw(G)=\Delta$
% is a layered graph, where a root vertex, $r$, is connected to $\Delta$ vertices,
% where every layer forms a clique 
% and between every pair of layers other than the first the edges of $G$
% form a complete bipartite graph, $K_{\Delta,\Delta}$.
We also have the following.

\begin{theorem}
\label{thm:forward}
For any graph, $G$, its bandwidth $\bw(G)\le 2\bigl(\bfsw_{\min}(G)\bigr)-1$.
\end{theorem}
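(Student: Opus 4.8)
The plan is to exhibit a single linear layout whose bandwidth meets the claimed bound, built directly from the BFS layering that witnesses $\bfsw_{\min}(G)$. First I would pick the starting vertex $v$ realizing the minimum, so that every layer satisfies $|L_i(v)| \le B$ where $B = \bfsw_{\min}(G)$. I would then order the vertices of $G$ by increasing layer number, breaking ties within a layer arbitrarily: $L_0(v)$ occupies the first positions, then $L_1(v)$, then $L_2(v)$, and so on, assigning positions so that each $L_i(v)$ occupies a contiguous block of $|L_i(v)|$ consecutive integers immediately after the block for $L_{i-1}(v)$.

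The key structural fact I would invoke is the standard property of breadth-first search that every edge of $G$ joins two vertices whose layer numbers differ by at most $1$; no edge can skip a layer, since an edge between $L_i(v)$ and $L_j(v)$ with $j \ge i+2$ would yield a shorter path to the farther endpoint, contradicting the distances defining the layers. Thus each edge is either \emph{intra-layer} (both endpoints in some $L_i(v)$) or \emph{inter-layer} (endpoints in consecutive layers $L_i(v)$ and $L_{i+1}(v)$).

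It then remains to bound the position difference in each case. For an intra-layer edge the two endpoints lie in the same block of size at most $B$, so their positions differ by at most $B-1$. For an inter-layer edge the worst case places one endpoint at the left end of the $L_i(v)$ block and the other at the right end of the adjacent $L_{i+1}(v)$ block; since the two blocks together span $|L_i(v)| + |L_{i+1}(v)| \le 2B$ consecutive positions, the maximum index gap is $|L_i(v)| + |L_{i+1}(v)| - 1 \le 2B - 1$. Taking the maximum over all edges yields bandwidth at most $2B - 1 = 2\bfsw_{\min}(G) - 1$.

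I do not expect a genuine obstacle here: the result is a direct consequence of the no-layer-skipping property of BFS combined with the size bound on the layers. The only point requiring slight care is the inter-layer count, where I must track that two adjacent blocks of at most $B$ positions each admit a maximum gap of $2B - 1$ rather than $2B$, so that the final bound is tight as stated.
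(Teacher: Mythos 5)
Your proposal is correct and follows essentially the same route as the paper's own proof: both order the vertices by nondecreasing BFS layer number from the vertex achieving $\bfsw_{\min}(G)$, invoke the standard no-layer-skipping property of BFS edges, and bound the position gap by the span of two consecutive layers, giving $2\bfsw_{\min}(G)-1$. Your write-up is in fact slightly more explicit than the paper's, spelling out the intra-layer case ($B-1$) separately from the inter-layer case ($2B-1$), but the argument is the same.
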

\begin{proof}
Let $T$ be the BFS tree for $G$
rooted at a vertex, $v$, that achieves the minimum BFS width,
$\bfsw_{\min}(G) =\bfsw(G,v)$.
By definition,
every layer in $T$ has at most $\bfsw_{\min}(G)$ vertices.
It is well-known
that in a graph, $G$, with a BFS tree, $T$, every non-tree edge, $(u,v)$,
connects two vertices whose layer numbers differ by at most $1$; 
see, e.g.,~\cite{goodrich,Cormen2022Algorithms}.
Thus, we can number the vertices of $T$ as $1,2,\ldots,n$ so as to have 
nondecreasing layer numbers. The BFS ordering is one way to achieve this.
This numbering defines a linear layout of the vertices. The maximum distance on the $x$-axis between two endpoints of an edge, therefore, will be at most
that determined by the first vertex in a layer $i$ and the last
vertex in a layer $i+1$, which can be at most $2\bigl(\bfsw_{\min}(G)\bigr)-1$.
\end{proof}

This immediately implies the following.

\begin{corollary}
\label{cor:forward}
For any graph, $G$, $\bw(G)\le 2\bigl(\bfsw(G)\bigr)-1$.
\end{corollary}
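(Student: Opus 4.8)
The plan is to derive this directly from \cref{thm:forward} together with the purely definitional inequality $\bfsw_{\min}(G)\le\bfsw(G)$. By definition we have $\bfsw_{\min}(G)=\min_{v\in V}\bfsw(G,v)$ and $\bfsw(G)=\max_{v\in V}\bfsw(G,v)$, and since a minimum over a nonempty finite set never exceeds the corresponding maximum, it follows that $\bfsw_{\min}(G)\le\bfsw(G)$. Substituting this into the bound already established in \cref{thm:forward}, namely $\bw(G)\le 2\bigl(\bfsw_{\min}(G)\bigr)-1$, gives the chain $\bw(G)\le 2\bigl(\bfsw_{\min}(G)\bigr)-1\le 2\bigl(\bfsw(G)\bigr)-1$, which is exactly the claimed inequality.

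I do not anticipate any genuine obstacle: the entire content is the observation that the $\min$-over-roots quantity is dominated by the $\max$-over-roots quantity, so the weaker (larger) upper bound phrased in terms of $\bfsw(G)$ is an immediate consequence of the stronger bound phrased in terms of $\bfsw_{\min}(G)$. It is worth remarking that the corollary is strictly looser than \cref{thm:forward} whenever $\bfsw_{\min}(G)<\bfsw(G)$, a gap the paper notes can be as large as a factor of $\Omega(\log n)$; nonetheless the corollary is the convenient form to cite when one only has access to the easily computed parameter $\bfsw(G)$ rather than a root attaining the minimum.
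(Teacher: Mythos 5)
Your proof is correct and matches the paper's intent exactly: the paper states the corollary follows immediately from \cref{thm:forward}, with the implicit step being precisely your observation that $\bfsw_{\min}(G)\le\bfsw(G)$ since a minimum over root choices cannot exceed the maximum. Nothing further is needed.
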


\begin{figure}[ht]
    \centering
    \includegraphics[width=1\linewidth]{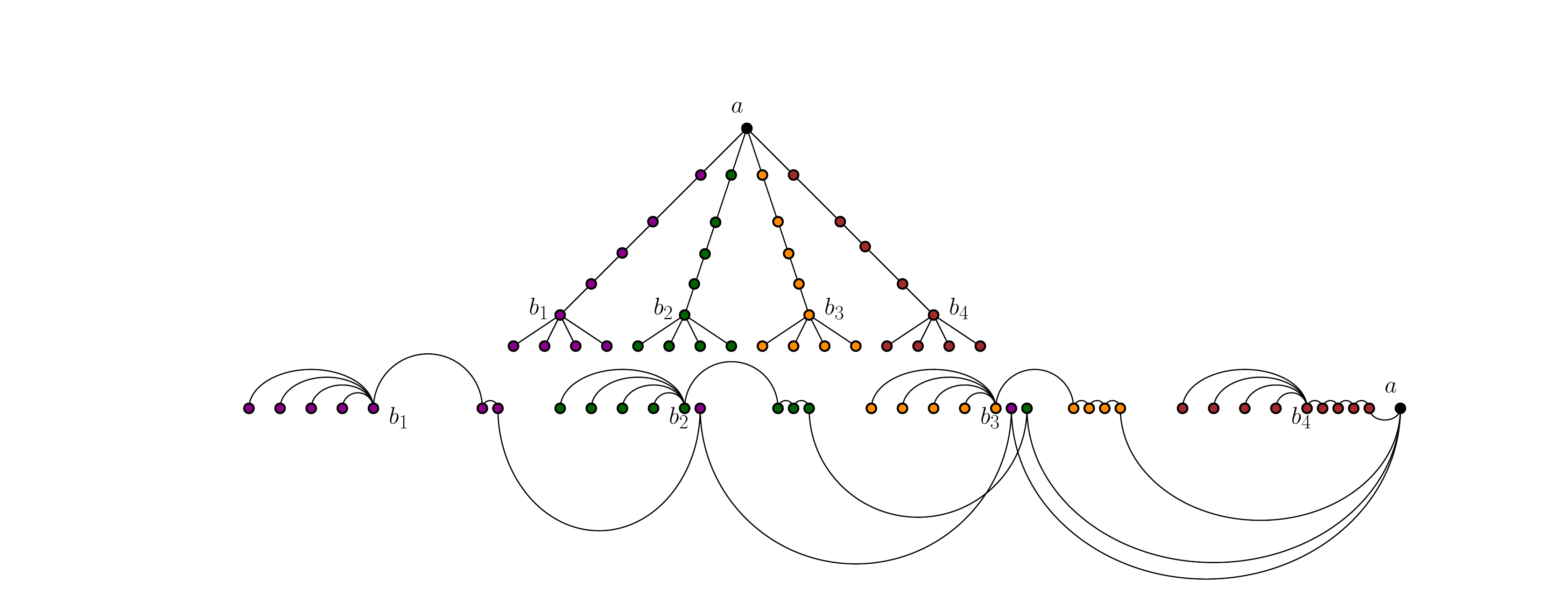}
    \caption{An example graph with $\bfsw(G)=\Theta(n)$ and $\bfsw_{\min}(G)=\Theta(n)$ and a linear layout that achieves its optimal bandwidth $\bw(G)=\Theta(\sqrt{n})$.}
    \label{fig:bw-sqrt-n}
\end{figure}
Given a linear layout of the vertices of a graph with respect to
bandwidth, we say that the \emph{length} of an edge, $\{u,v\}$, is
one plus the number of vertices between $u$ and $v$.  
\begin{theorem}
There is a graph, $G$, with $n$ vertices
such that its bandwidth $\bw(G)=\Theta(\sqrt{n})$ but $\bfsw(G)=\Theta(n)$ and $\bfsw_{\min}(G)=\Theta(n)$.
\end{theorem}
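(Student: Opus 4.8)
The plan is to realize the graph of \cref{fig:bw-sqrt-n} as a ``firework'' tree $G$ and to verify the three quantities separately. Concretely, I would take a center vertex $c$, attach $\sqrt{n}$ disjoint \emph{legs}, each a path of length $\sqrt{n}$ emanating from $c$, and at the far endpoint (the \emph{hub}) of each leg attach $\sqrt{n}$ pendant \emph{leaves}. This uses $\Theta(n)$ vertices, which I rescale to exactly $n$. All three claims concern this single tree; the work is to pin down $\bw(G)$ from both sides and to exhibit one large layer from \emph{every} root.

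First I would establish $\bfsw_{\min}(G)=\Theta(n)$, which also yields $\bfsw(G)=\Theta(n)$ since $\bfsw(G)\le n$ trivially and $\bfsw(G)\ge\bfsw_{\min}(G)$. The key observation is an equidistance phenomenon: because $G$ is a tree and every leg has the same length, the unique path from any vertex $x$ to a leaf $y$ lying on a leg other than $x$'s own leg must pass through $c$ and climb a full leg, so $d(x,y)=d(x,c)+\sqrt{n}+1$, a value independent of $y$. Hence all $(\sqrt{n}-1)\sqrt{n}=\Theta(n)$ such leaves occupy a single BFS layer of the search from $x$, giving $\bfsw(G,x)=\Theta(n)$ for every root $x$ (the cases where $x$ is internal to a leg or is a leaf are both covered by this formula, and $x=c$ is immediate since then all $n$ leaves sit at distance $\sqrt{n}+1$).

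Next I would prove the bandwidth bounds. For the lower bound $\bw(G)=\Omega(\sqrt{n})$ I would invoke the local density bound: taking $v=c$ and $d=\sqrt{n}+1$, every other vertex lies within distance $d$ of $c$, so $D(G)\ge |N(c,\sqrt{n}+1)|/(2(\sqrt{n}+1))=\Theta(n)/\Theta(\sqrt{n})=\Omega(\sqrt{n})$, and $\bw(G)\ge D(G)$. (The degree lower bound applied to a hub, which has degree $\sqrt{n}+1$, gives the same conclusion.) For the matching upper bound $\bw(G)=O(\sqrt{n})$ I would exhibit an explicit linear layout, which is the delicate step.

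The main obstacle is this layout, because the center and the leaves pull in opposite directions: $c$ has degree $\sqrt{n}$, so all leg-starts must sit within $O(\sqrt{n})$ positions of $c$, yet the $\Theta(n)$ leaves cannot be clustered and must be spread across $\Theta(n)$ positions, each within $O(\sqrt{n})$ of its hub. A naive layer-by-layer (BFS-order) layout clusters all hubs into one $O(\sqrt{n})$-window and then fails to fit the leaves; conversely, routing every leg outward at a uniform rate over-packs the neighborhood of $c$ by a logarithmic (harmonic) factor. I would resolve this by laying the legs out so that each leg first takes $\Theta(\sqrt{n})$-length steps to escape the central region quickly and reach its own far-out territory, after which that hub's $\sqrt{n}$ leaves are placed in a local $O(\sqrt{n})$-window. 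A counting check then shows that every position-window is over-occupied by at most a constant factor, so the vertices pack into $\Theta(n)$ positions with all edges of length $O(\sqrt{n})$. Verifying this constant (rather than logarithmic) over-occupancy, and hence that $\bw(G)=\Theta(\sqrt{n})$ exactly, is where I expect the real effort to lie.
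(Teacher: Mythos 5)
Your proposal is correct and takes essentially the same route as the paper: the same ``firework'' tree, a degree/local-density lower bound of $\Omega(\sqrt{n})$ on bandwidth, and a layout in which each leg escapes the center in $\Theta(\sqrt{n})$-length steps before its hub's leaves are placed locally---this is exactly the paper's block construction, which threads one vertex of each remaining center-to-hub path after each previously placed hub and then places the leftover path segment and star next to its own hub, achieving constant (not harmonic) over-occupancy just as you anticipated. Your equidistance argument that every BFS root puts the $(\sqrt{n}-1)\sqrt{n}$ off-leg leaves in a single layer at distance $d(x,c)+\sqrt{n}+1$ is in fact more explicit than the paper, which merely asserts that every root yields a $\Theta(n)$-size layer.
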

\begin{proof}
    The example is illustrated in \cref{fig:bw-sqrt-n}. 
    Take a star with $\sqrt{n}$ leaves and subdivide each edge $\sqrt{n}$ times. Then add a star with $\sqrt{n}$ leaves at the end of each path. This example first appeared in \cite{gupta_improved_2001} but they only gave a lower bound on bandwidth and did not compute the bandwidth. By the degree lower bound, $\bw(G) = \Omega(\sqrt{n})$. Regardless of which vertex is chosen as the root for BFS, BFS will produce a layout with bandwidth $\Theta(n)$. Now it remains to show a linear layout with bandwidth  $ O(\sqrt{n})$.

The path from $a$ to vertex $b_i$ and the star at $b_i$ are considered as one block and we call this block $B_i$. Suppose we have already placed $B_{\sqrt{n}-1}, B_{\sqrt{n}-2}, \dots, B_{i+1}$.  Now we describe how to place $B_i$. All   $\sqrt{n}$ leaves of  $b_i$ are placed on its left. Next we place the path from $a$ to $b_i$. Starting from $a$, we place one vertex to the right of $b_{\sqrt{n}-1}, b_{\sqrt{n}-2}, \dots, b_{i+1}$. This step uses $\sqrt{n} - i - 1$ vertices. The remaining path segment with $i + 1$ vertices are placed to the right of $b_i$.
    
    Next we analyze the bandwidth. The edge between $b_i$ and  its leftmost leaf  has length $\sqrt{n}$. Any vertex $b_i$ introduces a path vertex to the right of $b_{\sqrt{n}-1}$. In total there are $\sqrt{n} - 2$ vertices between  $b_{\sqrt{n}-1}$ and its parent in the tree. Block $B_{\sqrt{n}}$ has $2\sqrt{n} + 1$  vertices. Thus, the length of the first edge on path $ab_1$ is at most $4\sqrt{n}$. The  linear layout  has bandwidth  $O(\sqrt{n})$.
\end{proof}
This shows that there are graphs for which the linear layout produced by BFS is off by a factor of $\Theta(\sqrt{n})$.

As noted above, there are many algorithmic applications
with respect to graphs with bounded bandwidth, which,
in turn, implies bounded pathwidth and treewidth.
Theorem~\ref{thm:forward} implies that a graph with bounded minimum BFS
width has bounded bandwidth. 
Thus, 
any polynomial-time algorithm
for graphs with bounded bandwidth, pathwidth, or treewidth 
(e.g., see~\cite{andreica_dynamic_2012, bodlaender_combinatorial_2008}) also applies to any graph with bounded BFS width
for one of its vertices,
with the added benefit that we can determine the 
minimum BFS width for a graph in polynomial time.
Unfortunately, as the following theorem shows, bounded treewidth
does not imply bounded BFS width.

\begin{theorem}
There is a graph, $G$, with $n$ vertices
such that its treewidth $\tw(G)=1$ but $\bfsw(G)=n-1$ and $\bfsw_{\min}(G)=n-2$.
\end{theorem}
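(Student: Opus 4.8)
The plan is to exhibit the star $K_{1,n-1}$, consisting of a single center vertex $c$ joined to $n-1$ leaves. Since any star is a tree containing at least one edge, its treewidth is exactly $1$, which establishes the first claimed property immediately.

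Next I would compute $\bfsw(G,v)$ for each possible starting vertex $v$. By symmetry there are only two cases to check. If the root is the center $c$, then the layering has $L_0(c) = \{c\}$ and $L_1(c)$ equal to the set of all $n-1$ leaves, with no further layers, so $\bfsw(G,c) = n-1$. If instead the root is an arbitrary leaf $\ell$, then $L_0(\ell) = \{\ell\}$, $L_1(\ell) = \{c\}$, and $L_2(\ell)$ consists of the remaining $n-2$ leaves, so $\bfsw(G,\ell) = n-2$. Taking the maximum over all roots yields $\bfsw(G) = \max(n-1,n-2) = n-1$, and taking the minimum yields $\bfsw_{\min}(G) = \min(n-1,n-2) = n-2$, matching both claimed values.

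There is essentially no obstacle in this argument. The only point needing (minor) justification is that these two cases genuinely exhaust all choices of root: this holds because $c$ is the unique non-leaf vertex and every leaf plays a symmetric role, so every starting vertex is either $c$ or, up to automorphism, a fixed leaf $\ell$, and no other choice can produce a different maximum layer size. Combined with the earlier \cref{thm:forward}, this example also highlights the gap between the two width parameters, since a tree (treewidth $1$) can force BFS width as large as $n-1$.
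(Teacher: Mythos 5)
Your proof is correct and takes essentially the same approach as the paper: both use the star $K_{1,n-1}$ and observe that rooting the BFS at the center gives a layer of size $n-1$, while rooting at any leaf gives layers $\{c\}$ and the remaining $n-2$ leaves. Your extra remark about leaf symmetry exhausting all root choices is a minor elaboration the paper leaves implicit, but the argument is the same.
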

\begin{proof}
Let $G$ be the star graph with internal vertex $v$.
The BFS tree, $T$, from the vertex $v$ will have
$n-1$ vertices on its layer~$1$.
For any vertex $w\not=v$, 
the BFS tree, $T$, from $w$ will have $v$ 
on its layer~$1$, and the remaining $n-2$ vertices of $G$ on its layer 2.
\end{proof}

In these subsequent sections, we sometimes drop the graph, $G$, in 
width notations when the context is clear. When the graph is a tree with root $v$, a natural layering is defined by distances to the root and we use the term width to denote the BFS width of $G$ from $v$.
\section{Graphs of Bounded Bandwidth and Polylogarithmic BFS Width} 
\label{example}
In this section, we show how to construct a graph to 
prove that when bandwidth is bounded, 
BFS width can be as large as polylogarithmic in the number of vertices. 
This graph also proves that BFS width and minimum BFS 
width can differ by a logarithmic factor. 
The graph that we use to establish this lower bound
is an arbitrarily large binary tree.
% Before we present this lower bound, we note that
% neither bandwidth nor BFS width are defined to require bounded degree. 
% However, the existence of high-degree vertices will cause both of 
% these widths to be correspondingly high. 
% Thus, the use of trees in our lower-bound construction
% shows that the polylogarithmic separation that it provides between bandwidth 
% and BFS width is not due to an argument based on the maximum degree of graphs.
% \paragraph*{Our Lower Bound Construction}
Our construction is an induction
based on defining trees in terms of a level parameter, $k$,
which is the primary inductive term, 
plus an independent height parameter, $h$,
which determines the size, $n$, of the tree.

\textbf{Level 1.}
A level 1 tree, $T_{1,i}$, is simply a path, which defines
a tree when viewed as being rooted at one of 
the ends of this path.  Thus,
$\bw(T_{1,i}) = 1$ and $\bfsw(T_{1,i}) = 2$.
Before next jumping to the general case, for $k>1$, let us first describe
the level 2 trees, which introduce a constructive pattern we repeat for
the general case.

\textbf{Level 2.}
A level 2 tree  $T_{2, j}$  is defined for height, $h=2^j$, for any integer, $j>1$.
We start with $j+1$ level 1 trees, where $T_{1,i}$ has height $h_i = 2^{i} - 1$, 
for $i = 0, 1, \dots, j$. 
We also define another level 1 tree of height $h=2^j$, which
we call the \emph{spine}.
We connect each tree, $T_{1,i}$, with an edge to a vertex $v_i$ on the spine
so that every leaf in the resulting tree has the same depth.
See Figure~\ref{fig:bw2-bfs-log}.

\begin{figure}[hbt]
    \centering
    \includegraphics[width=.9\linewidth]{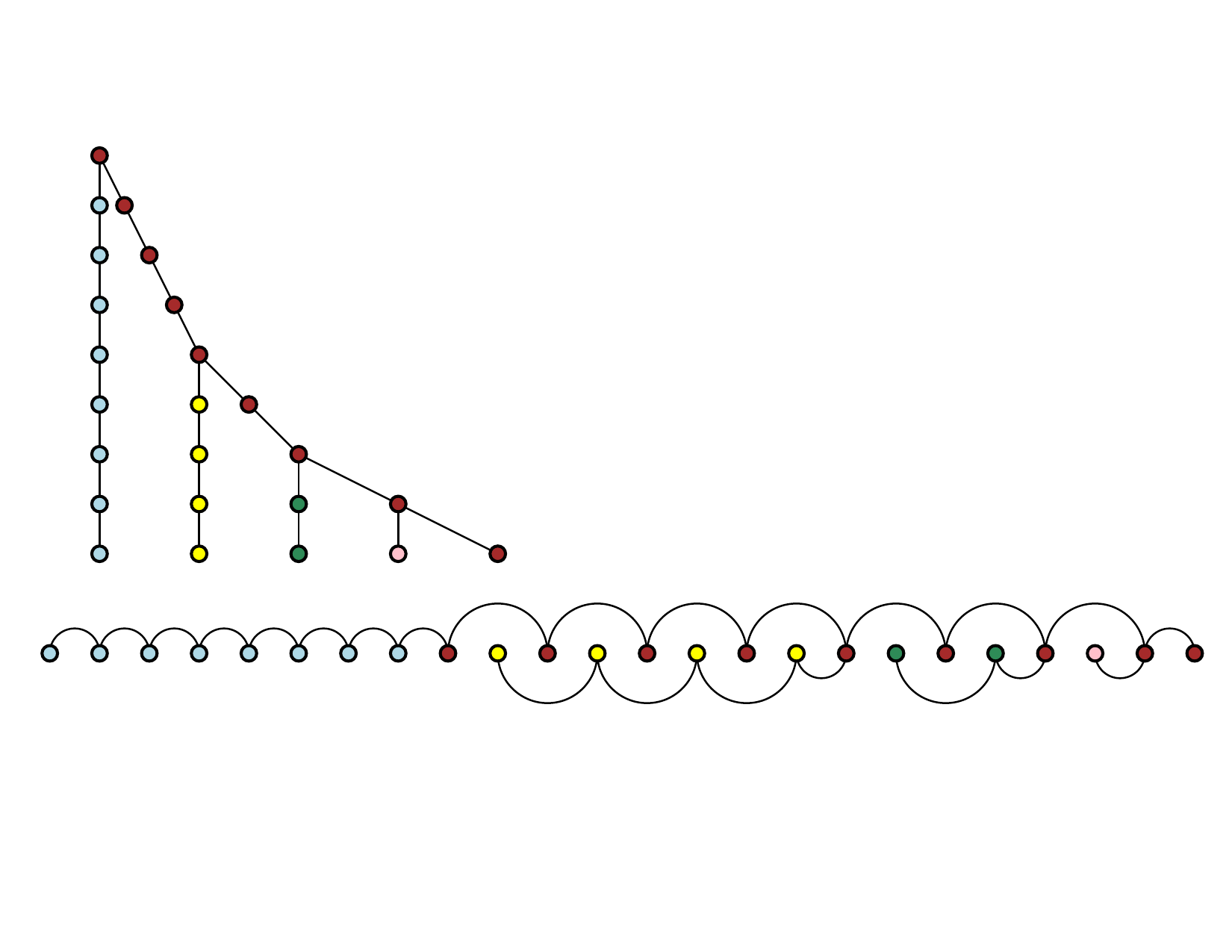}
    \caption{Top: the level 2 tree is constructed by attaching level 1 trees on a spine. The heights of these  level 1 subtrees approximately follow a geometric sequence.
    Bottom: an arc diagram illustrating the linear layout of the level 2 tree. Each edge is a semicircle. The maximum diameter of these semicircles equals the bandwidth of the linear layout. This linear layout achieves the minimum bandwidth 2. 
 }
    \label{fig:bw2-bfs-log}
\end{figure}

Thus, each subtree ``hanging off'' the spine 
is a path with height $h_i = 2^{i} - 1$, 
where $i = 0, 1, \dots, j$. The
height of the level 2 tree is $h  = 2^j$, where $j$
is independent of the level number 2.
As mentioned above,
all leaves in a level 2 tree have the same depth; hence, the number
of edges on the spine between two consecutive roots must be the
height difference between the corresponding subtrees:
$d(v_{i+1}, v_i) = h_{i+1} - h_i = 2^{i} = h_i + 1$, where $d$ is the shortest path distance.
Therefore, when we interleave a subtree, $T_{1,i}$,
and the spine edges between $v_{i+1}$ and $v_i$ in the linear layout, 
we can alternately take one vertex from each. 
This is illustrated in \cref{fig:bw2-bfs-log}. 
This layout has bandwidth 2, which is the minimum possible.

Let us now
consider the width (BFS width from the root) of a level 2 tree, $T_{2, j}$. This width lower bounds BFS width.
Since  $h  = 2^j$, $ j = \log h$ where $\log$ is base 2. The total number
of subtrees is $j + 1$, each contributing one vertex to the widest layer. The spine also has one vertex at the widest layer. 
Therefore, the width of  $T_{2, j}$ 
is $w = j + 1 + 1 = \log h + 2$.
We want to express this width in terms of the number of vertices, $n$. 
Any vertex of this tree is either in a subtree or on the spine.
The spine has height $h$ and $h + 1$ vertices. Subtrees have 
$n_s = \sum_{i=0}^{j} (h_i + 1)  = \sum_{i=0}^{j} 2^{i} = 2h - 1$ vertices. 
Thus, $n = h + 1 + 2h - 1 = 3h$;
hence, we can define a level 2 tree $T_{2, j}$ to have an arbitrarily large
number, $n$, of vertices.
Therefore,
$T_{2, j}$ has bandwidth $\bw(T_{2, j}) = 2$,
and width, $w = j + 2$, which implies
    $\bfsw(T_{2, j}) = \Omega(\log n)$,
where $n = 3h = 3 \cdot 2 ^ j$ can be arbitrarily large.

Following the structural pattern we used to build a level 2 tree,
to construct a tree at level $k$, we take trees from 
level $(k - 1)$ whose heights approximately follow a geometric sequence and
attach them to a path, which we again call the \emph{spine}. 
These level
$(k - 1)$ trees are called \emph{subtrees}. Their placement on the
spine is such that the leaves of these trees have the same depth
in the level $k$ tree, thus achieving their width in a single
(last) level of a breadth-first layering.
To get a bounded bandwidth
layout of the level $k$ tree, we interleave vertices from the spine and level
$(k - 1)$ subtrees. 
% Of course, these subtrees contain many
% more vertices than the spine; hence, the bandwidth of the level
% $k$ tree is determined by the number of subtree vertices between
% two spine vertices.  
These constructions are 
illustrated in \cref{fig:bw-bfs-log}. 

\begin{figure}[hbt]
    \centering
    \includegraphics[width=.9\linewidth]{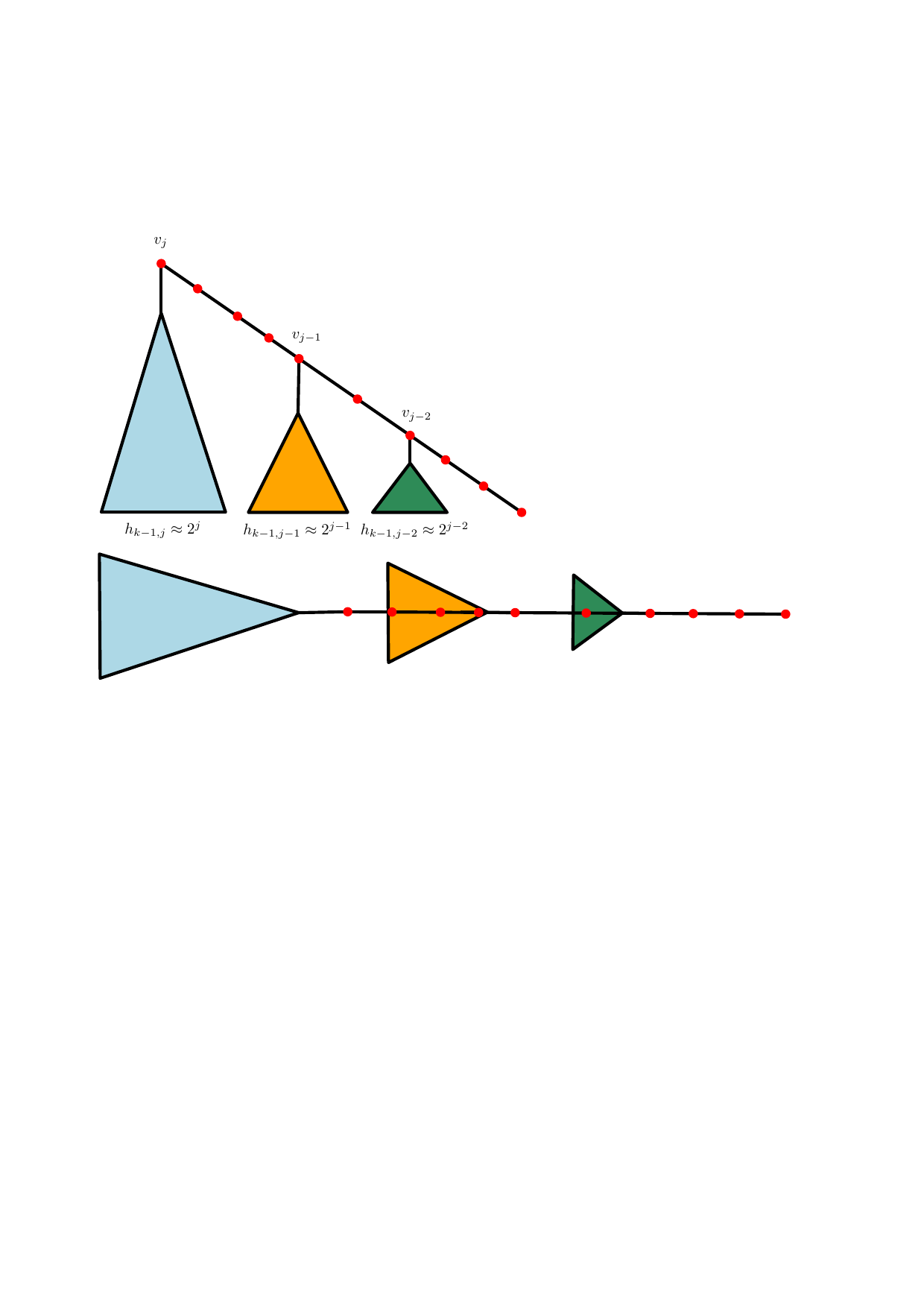}
    \caption{Top: the level $k$ tree is constructed by attaching level $(k - 1)$ trees on a spine. The heights of these  level $(k - 1)$ subtrees approximately follow a geometric sequence. The leaves of these subtrees have the same depth in the level $k$ tree.
    Bottom: a diagram illustrating the linear layout of the level $k$ tree. This does not necessarily achieve the minimum possible bandwidth. Intuitively, this is the level $k$ tree rooted at the rightmost vertex of the spine. We interleave vertices from the spine and level $(k - 1)$ subtrees. Vertices from a level $(k - 1)$ subtree are arranged according to their linear layout from the previous level.
    }
    \label{fig:bw-bfs-log}
\end{figure}

% \textbf{Level 3.}
% Notice that the height of these level 3 trees must be of the form $h = 2 ^ j + 1.$ 

% The second from the left tree has $3 \cdot 2^{j-1}$ vertices.
% The path path to its left has $2^j - 2^{j-1} = 2^{j-1}$ vertices.
% To interleave their vertices, we use three vertices from the tree and then one from the path. This pattern is the same when we consider the other subtrees, as well.
% This placement has $\bw = 3+1=4$.

% All the subtrees have
% $n_s = 3 \cdot \sum_{i=0}^{j} (2^i + k - 3) = 3 \cdot (2^{j+1} - 1 + (j+1)(k - 3))$
% vertices.
% The  of the main path is the height of
% the leftmost tree plus 1. It has $n_p = 2^j + k - 3 + 1 $ vertices.
% In total, the tree has $n = n_s + n_p = 7 \cdot 2^j - 2 + (3j+4)(k - 3) = 7 \cdot 2^j $   vertices.

% The width of the subtrees is $\log \left(2^i+k-3\right)=i$. They form an arithmetic sequence. The overall width is the sum of their widths. 
% \begin{align*}
%     \sum_{i=1}^j i=\frac{(1+j) j}{2} = \frac{1}{2} j ^ 2 = \Theta\left(\log ^2 n \right)  \\
% \bfsw = \Omega\left(\log ^2 n \right)
% \end{align*}

Let us, therefore, consider 
the bandwidth and BFS width of a level $k$ tree.
% In particular, let us consider the
% the first two levels to demonstrate the pattern. 
Suppose the
subtree below vertex $v_i$ (excluding $v_i$) is $T_{k-1,i}$.
% Now we use induction to prove the general case. 
Given this notational background,
let us prove the following theorem, which is one of the two main
technical theorems in this paper:

\begin{theorem}[Lower bound  on BFS width]
\label{theorem:lower}
    For each non-negative integer, $k$, and for arbitrarily 
large values of $n$, there exists a (level $k$) 
tree, $T$, with $n$ vertices that satisfies:
    \begin{align*}
            \bw(T) &= O_k(1), \\
    \bfsw(T) &= \Omega(\log^k n) ,
    \end{align*}
    where $O_k(1)$ denotes a number that depends on $k$ 
but is bounded when $k$ is bounded.
\end{theorem}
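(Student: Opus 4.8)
The plan is to prove the two estimates together by induction on the level $k$, carrying along three quantities for a level-$k$ tree $T_k$ whose spine has height $H=2^J$: a linear layout witnessing its bandwidth, its width from the root, and its size $n$. I would take the path (level $1$) as the base case, where the layout has bandwidth $1$ and the width is constant, and I would maintain as part of the hypothesis the bookkeeping fact that a level-$(k-1)$ tree of height $2^i$ has $n_i=\Theta(2^i)$ vertices and realizes its width in a single bottom (leaf) layer.

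For the BFS-width lower bound I root $T_k$ at the shallow end of the spine. By construction the leaves of every subtree are placed at one common depth $H$, so the single bottom layer of $T_k$ collects the entire leaf layer of each attached subtree; hence, writing $W_m$ for the root-width of a level-$m$ tree, $\bfsw(T_k,\text{root})\ge\sum_i W_{k-1}(T_{k-1,i})$. Since the subtree heights run through the geometric sequence $2^1,2^2,\dots,2^J$, an inductive width bound of order $(\log H)^{c}$ for level $k-1$ turns into $\sum_{i\le J}\Theta(i^{c})=\Theta(J^{c+1})$ at level $k$, raising the exponent by one. With $n=\Theta(H)$, so that $J=\Theta(\log n)$, iterating this from the base case gives $\bfsw(T_k)=\Omega(\log^{k}n)$, and $n$ may be taken arbitrarily large by increasing $J$.

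For the bandwidth upper bound I would use the interleaved layout of \cref{fig:bw-bfs-log}: place the spine along the line and, in the slack between consecutive attachment vertices $v_{i+1}$ and $v_i$, interleave the recursive layout of the subtree $T_{k-1,i}$ with the intervening spine segment. The gap $g_i=d(v_{i+1},v_i)=h_{i+1}-h_i=\Theta(2^i)$ supplied by the geometric spacing matches the subtree size $s_i=\Theta(2^i)$, so the interleave ratio $g_i/s_i$ is $\Theta(1)$. A uniform interleave at a bounded ratio stretches every subtree-internal edge from length at most $b_{k-1}$ to $O(b_{k-1})$ and every spine edge to $O(1)$, and placing each subtree root next to its attachment vertex keeps the connecting edges short; hence $\bw(T_k)=O(b_{k-1})$, which compounds over the $k$ levels to $\bw(T_k)=O_k(1)$, independent of $n$.

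I expect the bandwidth step to be the main obstacle. The delicate points are to keep the interleave ratio $g_i/s_i$ bounded away from both $0$ and $\infty$ simultaneously for all $i$ and at every level --- which forces me to maintain the size estimate $n_i=\Theta(\text{height})$ through the recursion --- and to control the junctions at each $v_i$, where one interleaved block abuts the next and the attachment edge leaves the spine, so that no single long edge is created. By contrast the width lower bound is, once the leaf-alignment and the ``widest layer is the leaf layer'' invariant are carried through the induction, just the geometric-series estimate $\sum_{i\le J} i^{c}=\Theta(J^{c+1})$.
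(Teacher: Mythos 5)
Your proposal follows essentially the same route as the paper's proof: the identical spine-plus-geometrically-spaced-subtrees construction with leaves aligned at a common depth, the width lower bound via $\sum_{i\le J}\Theta(i^{c})=\Theta(J^{c+1})$ (the paper invokes Faulhaber's formula for this), and the bandwidth bound via interleaving each subtree's recursive layout with its spine segment at a bounded ratio $n_{k-1,i}/2^i=\Theta_k(1)$. The only differences are cosmetic (you start the induction at level $1$ rather than level $2$, and you carry the ``width is achieved at the leaf layer'' invariant explicitly, which the paper states informally), so your argument is correct and matches the paper.
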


\begin{proof}
We construct the level $k$ tree to have 
an arbitrarily large integer, $n$, of vertices, 
by induction, from level $(k-1)$ trees. 
To keep track of
the tree and subtree sizes, we let $n_{k,j}$ denote
the size of the tree, $T_{k,j}$, we are 
constructing (to roughly have height
that is arbitrarily large as a power of $j$).
The base case is the construction given above,
which in this notation, gives a level $2$ tree with $n_{2,j} = 3h_{2,j}$ vertices
and height $h_{2,j}=2^j$ and to
have bandwidth $2$ and BFS width at least $d_2j + 2$, for
the constant, $d_2=1$. 
Accordingly,
in our analysis below, we focus on the dominant components sufficient
for our asymptotic analysis. 
    
\textbf{Inductive step:}
Suppose, for $i=0,1,2,\ldots,j$,  
we have a level $(k-1)$ tree, $T_{k-1,i}$,
with height $h_{k-1,i} = 2^i + k - 3$ 
having width $w_{k-1,i}$ and $n_{k-1,i}$ vertices such that 
\begin{align*}
n_{k-1,i} = c_{k-1} \cdot 2^i + p_{k - 1} (i),   ~~~~
w_{k-1,i} =  d_{k-1} i^{k-2} + q_{k - 1} (i),  ~~~~
\bw'(T_{k - 1,i}) = O_{k - 1}(1),
\end{align*}
where $\bw'$ is the bandwidth of the linear layout we construct, $c_{k-1}$ and $d_{k-1}$ are constants, 
and $p_{k - 1}$ and $q_{k - 1}$ are polynomial functions with degree $ k - 3$.
Let us now construct
a level $k$ tree where the height of the leftmost subtree is 
$h_{k-1,j}=2^j + k - 3$, so
this level $k$ tree has height $h_{k,j} =  2^j + k - 2$.  
Using Faulhaber's formula 
(e.g., see Knuth~\cite{knuth1993johann}), we can characterize
the total number of vertices in all the subtrees as
\[
n_s \,=\, \sum_{i=0}^j( c_{k-1}  \cdot 2^i + p_{k - 1} (i) )\,=\, c_{k-1} \cdot (2^{j+1} - 1) + p_{k} (j),
\]
where $p_k$ is a polynomial function with degree $ k - 2$. 
The spine has $h_{k,j} + 1$ vertices.
The total number of vertices in this level $k$ tree, therefore, is
\begin{align*}
n_{k,j} &= n_s + h_{k,j} + 1 
    = c_{k-1} \cdot (2^{j+1} - 1)  + p_{k} (j) + 2^j + k - 2 + 1 \\
    &= (2c_{k-1} + 1) \cdot 2^j  + p_{k}' (j) 
    = c_k \cdot 2^j  + p_{k}' (j),
\end{align*}
where $p_k'$ is a polynomial function with degree $ k - 2$, and $c_k \coloneq 2c_{k-1} + 1$.
 Since $c_2 = 3$, we can solve this recurrence relation
to yield $c_k = 2 ^ k - 1$.

\textbf{Bandwidth:}
Let us next upper bound the bandwidth, $\bw(T_{k,j})$,
of the tree, $T_{k,j}$. 
We interleave each subtree $ T_{k-1,i}$ and the spine 
path between $v_{i+1}$ and $v_i$ in the linear layout.
$ T_{k-1,i}$ has height $h_{k-1,i} $ and $n_{k-1,i}$ vertices. The spine path $v_{i+1} v_i$ has length $d(v_{i+1}, v_i) = h_{k-1,i+1} - h_{k-1,i} = 2^{i}$.
We use the linear layout of subtree $ T_{k-1,i}$ obtained from the previous level and insert vertices from the spine path. 
Let $ r_i \coloneq n_{k-1,i} / 2^i = c_{k-1}  + p_{k-1}(i) / 2^i $.
We alternately place $\lceil r_i \rceil$ vertices from  $ T_{k-1,i}$  and one vertex from  the spine 
path  $v_{i+1} v_i$, until all vertices from  $ T_{k-1,i}$ have been placed, in which case we place the remaining vertices from $v_{i+1} v_i$ consecutively. Any edge connecting two vertices from $ T_{k-1,i}$ with length equal to 
$\bw'(T_{k - 1,i})$ may have at most $\left\lceil \frac{\bw'(T_{k-1,i})}{\lceil r_i \rceil + 1} \right\rceil$  spine vertices inside it. Any edge connecting two vertices from the spine path $v_{i+1} v_i$ has at most $\lceil r_i \rceil$ vertices from  $ T_{k-1,i}$ inside it.  These are the only two factors that cause the bandwidth to increase from level $(k-1)$ to level $k$, so the recurrence relation is 
\begin{align*}
    \bw'(T_{k,j}) \le \max_{i = 0}^j \max \left( \bw'(T_{k-1,i}) +  \left\lceil \frac{\bw'(T_{k-1,i})}{\lceil r_i \rceil + 1} \right\rceil, \lceil r_i \rceil + 1 \right)
    .
\end{align*}

Since $\lim_{i \to \infty} r_i = c_{k - 1}$, for large values of $i$, $r_i < c_{k - 1} + 1$, and for small values of $i$, $r_i$ can be upper bounded by another constant. Thus, this level $k$ linear layout has bandwidth $O_k(1)$.
 % Since $\bw_{k - 1} \leq 2 ^ {k - 2} \ll c_{k-1}$, the edges connecting the spine vertices determine the overall bandwidth, and this linear layout has bandwidth $c_{k-1} + 1 = 2^{k-1}$.

\textbf{BFS width:}
BFS width is lower bounded by width, which can be computed using Faulhaber's formula 
(e.g., see Knuth~\cite{knuth1993johann}):
\begin{align*}
w_{k,j} = \sum_{i=0}^j (d_{k-1} \cdot i^{k-2} + q_{k - 1} (i))
    = \frac{d_{k-1}}{k - 1} j^{k-1} + q_{k} (j)
    = d_k j^{k-1}  + q_{k} (j), 
\end{align*}
where $d_k \coloneq \frac{d_{k-1}}{k - 1}$ and $q_k$ is a polynomial function with degree $k - 2$.
Since $d_2 = 1$, we can solve this recurrence relation
to yield $d_k = \frac{1}{ (k - 1)!}$.

We have therefore 
shown that the tree $T_{k,j}$ has
height $h_{k,j} =  2^j + k - 3$ and $n_{k,j}$ vertices, with
\begin{align*}
    n_{k,j} &= c_k \cdot 2^j  + p_{k}' (j),  \\
        \bw(T_{k,j}) &=
O_k(1), \mbox{~~~and}  \\
    \bfsw(T_{k,j}) &\ge w_{k,j}  = d_k j^{k-1} + q_{k} (j) = \Omega(\log^{k - 1} n_{k,j}).
\qedhere
\end{align*}
\end{proof}

Thus, a graph with bounded bandwidth can have 
polylogarithmic BFS width. Also:

\begin{theorem}[Lower bound on minimum BFS width]
\label{theorem:lower-bfsw-min}
    For each non-negative integer, $k$, and for arbitrarily 
large values of $n$, there exists a (level $k$) 
tree, $T$, with $n$ vertices that satisfies:
    \begin{align*}
            \bw(T) &= O_k(1), \\
    \bfsw_{\min}(T) &= \Omega(\log^k n) ,
    \end{align*}
    where $O_k(1)$ denotes a number that depends on $k$ 
but is bounded when $k$ is bounded.
\end{theorem}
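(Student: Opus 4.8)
The plan is to reduce to \cref{theorem:lower} by a symmetrization (doubling) trick. \cref{theorem:lower} produces a level $k$ tree $T = T_{k,j}$ with $\bw(T) = O_k(1)$ together with a distinguished \emph{good root} $\rho$ (the top of the top-level spine) from which a single BFS layer---the common layer of all the aligned leaves---has $\Omega(\log^k n)$ vertices. The difficulty is that this same tree has \emph{small} minimum BFS width: rooting instead at the opposite spine end $v_0$ sends the leaf of the subtree attached at $v_i$ to distance $\approx 2^{i+1}$, so the subtrees occupy pairwise disjoint, geometrically separated ranges of layers, each layer receives only $O_k(1)$ vertices, and $\bfsw_{\min}(T) = O_k(1)$. (This is precisely the $\bfsw/\bfsw_{\min} = \Omega(\log n)$ gap promised in the preliminaries.) To force \emph{every} root to witness a wide layer, I would take two disjoint copies $T^A, T^B$ of $T$ and join them by a single edge between their good roots $\rho^A,\rho^B$, obtaining a tree $T'$ on $n' = 2n$ vertices.

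First I would bound the bandwidth of $T'$. The key observation is that the interleaving layout built in \cref{theorem:lower} can be taken to place the good root $\rho$ at one end of the line: the layout processes the spine from $v_0$ upward, interleaving each spine segment $[v_i,v_{i+1}]$ with its hanging subtree, and $\rho$ sits just past the last attachment $v_j$, hence at the far end. For $k>2$ this is inherited recursively, since each level-$(k-1)$ subtree is entered at its own good root and may be laid out with that root at the end adjacent to its spine attachment. Placing the layout of $T^A$ with $\rho^A$ rightmost and the reversed layout of $T^B$ with $\rho^B$ leftmost, the joining edge $\rho^A\rho^B$ has length $1$, so the concatenated layout has bandwidth $\max\bigl(\bw'(T^A),\bw'(T^B)\bigr) = O_k(1)$.

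Next I would lower-bound $\bfsw_{\min}(T')$. Since $T'$ is a tree, any root $v$ lies in exactly one copy, say $T^A$, and every path from $v$ into $T^B$ traverses the single edge $\rho^A\rho^B$. Hence for every $u \in T^B$ we have $d(v,u) = d(v,\rho^A) + 1 + d_{T^B}(\rho^B,u)$, so the restriction to $T^B$ of the BFS layering from $v$ is exactly the BFS layering of $T^B$ from its good root $\rho^B$, shifted by the constant $d(v,\rho^A)+1$. In particular the single wide layer of $T^B$ as seen from $\rho^B$, of size $\bfsw(T,\rho) = \Omega(\log^k n)$, maps to one layer of the layering from $v$. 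This holds for every $v$, so $\bfsw_{\min}(T') = \Omega(\log^k n) = \Omega(\log^k n')$ while $\bw(T') = O_k(1)$.

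The main obstacle is the bandwidth step---specifically, verifying that the good root can be placed at an end of a bounded-bandwidth layout at every level of the recursion; once this is in hand the BFS-width step is immediate from the unique-path structure of trees. An alternative that avoids relocating the root is to connect the two copies through their good roots by a short path and absorb its length into the additive constants, but the end-of-layout property is the cleaner route.
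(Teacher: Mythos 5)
Your overall strategy is exactly the paper's: the paper's own (two-sentence) proof likewise takes a bounded-bandwidth layout modified so that the wide-layer root sits at one end of the line, then glues two reflected copies together at the root. Your BFS-width half---that in a tree every path from an arbitrary root $v$ into the far copy factors through the bridge, so the far copy's wide layer reappears, shifted by a constant, as a single layer of the layering from $v$---is airtight and is the same unique-path argument the paper relies on. However, the specific claim you use to discharge the bandwidth step is false as stated: in the construction of \cref{theorem:lower} the good root does \emph{not} sit ``just past the last attachment $v_j$.'' The tallest subtree $T_{k-1,j}$ has height $h_{k,j}-1$, so its attachment vertex $v_j$ is at depth $0$, i.e., it \emph{is} the root, and in the interleaved layout that subtree's roughly $c_{k-1}\cdot 2^j$ vertices---a constant fraction of $n$---extend beyond the root, leaving the root $\Theta(n)$ positions away from the nearer end of the layout. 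So the end-of-layout property does not come for free from the existing layout, recursively or otherwise; this is precisely why the paper writes ``a construction \emph{similar} to \cref{fig:bw-bfs-log}'' rather than reusing it verbatim.

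The gap is fixable in the spirit you intend: modify the layout so that the top subtree's vertices are threaded along the entire spine, e.g., distributed at a rate of about $c_{k-1}$ vertices per spine edge among the material already interleaved there. Each unit of spine already carries $O_k(1)$ vertices, so every edge length grows by at most a constant factor, giving a multiplicative recurrence of the form $\bw'(T_{k,j}) \le C\cdot \bw'(T_{k-1,i}) = O_k(1)$, and the root now genuinely lands at an end; maintaining ``good root at the end'' as the inductive invariant for the level-$(k-1)$ subtrees then goes through as you sketched. Your fallback variant is shakier as phrased: the connecting path cannot be ``short,'' since bridging from the root's actual position to the end of the layout requires a path of length $\Theta(n)$, which is not absorbable ``into the additive constants''---although a $\Theta(n)$-length path, itself interleaved into the existing layout, would also work, because your constant-shift argument is indifferent to the bridge's length.
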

\begin{proof}
A construction similar to \cref{fig:bw-bfs-log} can be used to place the root at one end of the linear layout. Taking two reflected copies of the resulting tree, connected at the root, gives a lower bound for minimum BFS width (with different constants than \cref{theorem:lower}). \end{proof}

\begin{theorem}
\label{theorem:ratio}
For arbitrarily large values of $n$, 
there exists a binary tree, $T$, 
with $n$ vertices that satisfies $\bfsw(T) / \bfsw_{\min}(T) = \Omega(\log n)$. 
\end{theorem}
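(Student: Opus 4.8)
The plan is to reuse the level~2 tree $T_{2,j}$ constructed at the start of \cref{example} (the base case of \cref{theorem:lower}), which has $n = 3\cdot 2^{j}$ vertices and maximum degree~$3$, and hence can be regarded as a binary tree. The analysis there already shows that the BFS layering rooted at the top of the spine collapses every hanging path's leaf, together with the bottom spine vertex, into a single layer of size $j+2$, so that $\bfsw(T_{2,j}) \ge j + 2 = \Omega(\log n)$. It therefore suffices to exhibit a \emph{single} starting vertex whose BFS layering has width $O(1)$, since this upper-bounds $\bfsw_{\min}(T_{2,j})$ and immediately yields the claimed ratio.

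First I would root the breadth-first search at the bottom endpoint $b$ of the spine (the unique spine leaf at depth $2^{j}$ from the top). Recall that each hanging path $T_{1,i}$ has height $h_i = 2^{i}-1$ and is attached so that all leaves share depth $2^{j}$ from the top; consequently its attachment vertex $v_i$ sits at spine-distance exactly $2^{i}$ from $b$. The key computation is then that, measured from $b$, the spine contributes exactly one vertex to each of the layers $0,1,\dots,2^{j}$, while the vertices of $T_{1,i}$ occupy precisely the layers
\[
\{\,2^{i}+1,\ 2^{i}+2,\ \dots,\ 2^{i+1}\,\},
\]
one vertex per layer, since $T_{1,i}$ is a path.

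The crux of the argument, which I expect to be the main point to state cleanly, is that these subtree ranges are pairwise disjoint: because the heights grow geometrically, the integer intervals $(2^{i}, 2^{i+1}]$ partition $(1, 2^{j+1}]$, so each layer receives a vertex from \emph{at most one} subtree. This is exactly the reverse of the behavior from the top root, where the equal-depth design forces all subtrees to pile into one layer. Combining the two observations, every layer of the BFS from $b$ contains at most one spine vertex and at most one subtree vertex, so $\bfsw(T_{2,j}, b) \le 2$ and hence $\bfsw_{\min}(T_{2,j}) \le 2$.

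Finally I would combine the bounds: since $n = \Theta(2^{j})$, we get $\bfsw(T_{2,j})/\bfsw_{\min}(T_{2,j}) \ge (j+2)/2 = \Omega(\log n)$. As $j$ may be taken arbitrarily large, this produces binary trees of arbitrarily large size realizing the desired logarithmic separation, so no new construction beyond the level~2 tree is needed; the entire content is the observation that rooting at $b$ spreads the subtrees across distinct layers rather than stacking them.
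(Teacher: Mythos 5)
Your proposal is correct and takes essentially the same approach as the paper: both use the level 2 tree $T_{2,j}$, compare the width $j+2$ of the BFS layering from the top of the spine against the width $2$ of the layering from the bottom spine leaf $v_s$, and conclude the ratio is at least $(j+2)/2 = \Omega(\log n)$. Your explicit disjoint-interval computation showing that the subtrees occupy the layers $(2^i, 2^{i+1}]$ one vertex at a time is a nice detail that the paper leaves implicit in its figure, but the argument is the same.
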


\begin{proof}
The level 2 tree, $T = T_{2,j}$, is originally rooted at the leftmost spine vertex $v_j$. Consider rooting $T$ at the rightmost spine vertex $v_s$, as in the bottom of \cref{fig:bw2-bfs-log}. The width of the new tree is 2. Thus,
\begin{align*}
    \bfsw(T, v_j) &= j + 2,      \\
    \bfsw(T, v_s) &= 2,      \\
\frac{\bfsw(T)}{\bfsw_{\min}(T)} &\geq \frac{\bfsw(T, v_j)}{\bfsw(T, v_s)} 
= \frac{j + 2} {2} 
=  \frac{\log \left( \frac{n}{3} \right) + 2} {2}  , \mbox{~~~and} \\
\frac{\bfsw(T)}{\bfsw_{\min}(T)} &= \Omega(\log n).
\qedhere
\end{align*}
\end{proof}
% \begin{proof}
%     Consider rooting the level $k$ tree, $T$,
% from the
% proof of Theorem~\ref{theorem:lower}
%  at the rightmost spine vertex $v_s$. 
% We show that this new root leads to a smaller BFS width. For arbitrarily large values of $n_{k,j}$, 
% the parameter $j$ is arbitrarily large 
% and we can omit polynomial terms in the limit.
% The height of each subtree $ T_{k-1,i}$ approximately equals the length of the spine path  $v_{i+1} v_i$. Thus, the subtrees do not overlap in the new BFS tree, as in \cref{fig:bw-bfs-log}. The width of the new tree is the maximum width among all subtrees, which is achieved by the width of the leftmost subtree. 
% Thus,
% \begin{align*}
    % \bfsw(T, v_j) &= w_k = \frac{1}{ (k - 1)!} j^{k-1}  + q_{k} (j),      \\
    % \bfsw(T, v_s) &= \frac{1}{ (k - 2)!} j^{k-2} + q_{k - 1} (j),      \\
% \frac{\bfsw(T)}{\bfsw_{\min}(T)} &\geq \frac{\bfsw(T, v_j)}{\bfsw(T, v_s)} 
% \to \frac{j} {k - 1} 
% =  \frac{\log \left(\frac{n_{k,j}}{c_k}\right)} {k - 1}  , \mbox{~~~and} \\
% \frac{\bfsw(T)}{\bfsw_{\min}(T)} &= \Omega(\log n_{k,j}).
% \qedhere
% \end{align*}
% \end{proof}

\section{Polylogarithmic BFS Width for Graphs of Bounded Bandwidth}
In this section, we show that if bandwidth is bounded, then BFS width is at most polylogarithmic in the number of vertices. 
The construction in \cref{example} shows that this upper bound is tight.
The proof follows a structure similar to our lower-bound construction. 
First, we make several simplifying assumptions with respect to the graph,
which we make without loss of generality (wlog). 
Next, we identify spines and subtrees as in our construction above. 
Then we use strong induction to prove our bound.

The inductive hypothesis is as follows: for arbitrarily large values
of $n$, any graph on $n$ vertices with bandwidth $\bw \leq k$,
for a fixed constant, $k\ge 1$,
has BFS width $\bfsw = O\bigl(\log^{k - 1}(n)\bigr)$. 
The base case, $k=1$,
is clearly true, since a connected graph with bandwidth $1$ is a path.
For the induction step,
suppose that if
any graph, $G$, on $n$ vertices has bandwidth $\bw(G) \leq k$,
for a fixed constant, $k\ge 1$, then 
$\bfsw(G) = O\bigl(\log^{k - 1}(n)\bigr)$. 
To show the induction hypothesis holds for $k+1$, 
we consider a widest BFS tree of $G$, and show that there are at most a logarithmic
number of subtrees in $T$ and each subtree has width
$O\bigl(\log^{k - 1}(n)\bigr)$. Consequently, when these subtrees are
combined, the overall width remains polylogarithmic, with a 
higher (constant) exponent.

Suppose we have a linear layout of an $n$-vertex graph, $G$, that achieves 
minimum bandwidth,
let $T$ be a
widest BFS tree of $G$,
and let $r$ be the root of $T$.
We want to show that $T$ has width $ O\bigl(\log^{k}(n)\bigr)$.
Thus, without changing the BFS width, we can delete non-tree edges from $G$
as well as all vertices and edges below the first widest layer of $T$. 
This may reduce the number of vertices, but, wlog, let us nevertheless 
refer to the reduced number of vertices as $n$ and the reduced tree as $T$,
since we are interested in an upper bound in terms of the number of vertices
in the original graph. 
Thus, let us focus our attention on the reduced tree, $T$,
instead of the original graph.

Furthermore, we can make extreme points of the linear layout leaves of $T$. Suppose this is not the case. Wlog, we consider two scenarios:
\begin{enumerate}
    \item The leftmost vertex $v$  of the linear layout is the root. If the root only has one child, then we can remove it and make its child the new root. After repeating this step, either the new leftmost vertex is not the root, or it is the root with at least two children. In the latter case, we can take the subtree rooted at one child and flip its linear layout around the root. Thus, extreme points of the linear layout will not be the root.
    \item The rightmost vertex $v$  of the linear layout is neither the root nor a leaf. All leaves of the subtree $T_v$ rooted at $v$ are to the left of $v$ in the linear layout. We can flip this subtree around $v$ to move non-root vertices of $T_v$ to the right side of $v$. After this, we check the new rightmost vertex $w \in T_v$. If it is not a leaf, then $T_w \subset T_v$ and $v \notin T_w$. Since the subtree $T_w$ contains fewer vertices than $T_v$, each iteration of this process reduces the size of the subtree rooted at the rightmost vertex. Therefore, after repeating this process finitely many times, the rightmost vertex will be a leaf.

\end{enumerate}

After these operations, extreme points of the linear layout are now leaves of $T$. Note that the width of the tree does not change and the bandwidth of this new ordering cannot be greater than before.
The \emph{left spine} is defined as the path from the root to the leftmost vertex, and the 
\emph{right spine} is defined similarly. All non-spine vertices belong to a \emph{subtree} ``hanging off'' the spine. In this section, it is more convenient to treat the root vertices of these subtrees as the intersection of subtrees and the spine. Now we upper bound the number of subtrees that contribute to the width of $T$.
\begin{figure}[ht]  
    \centering
    \includegraphics[width=0.8\linewidth]{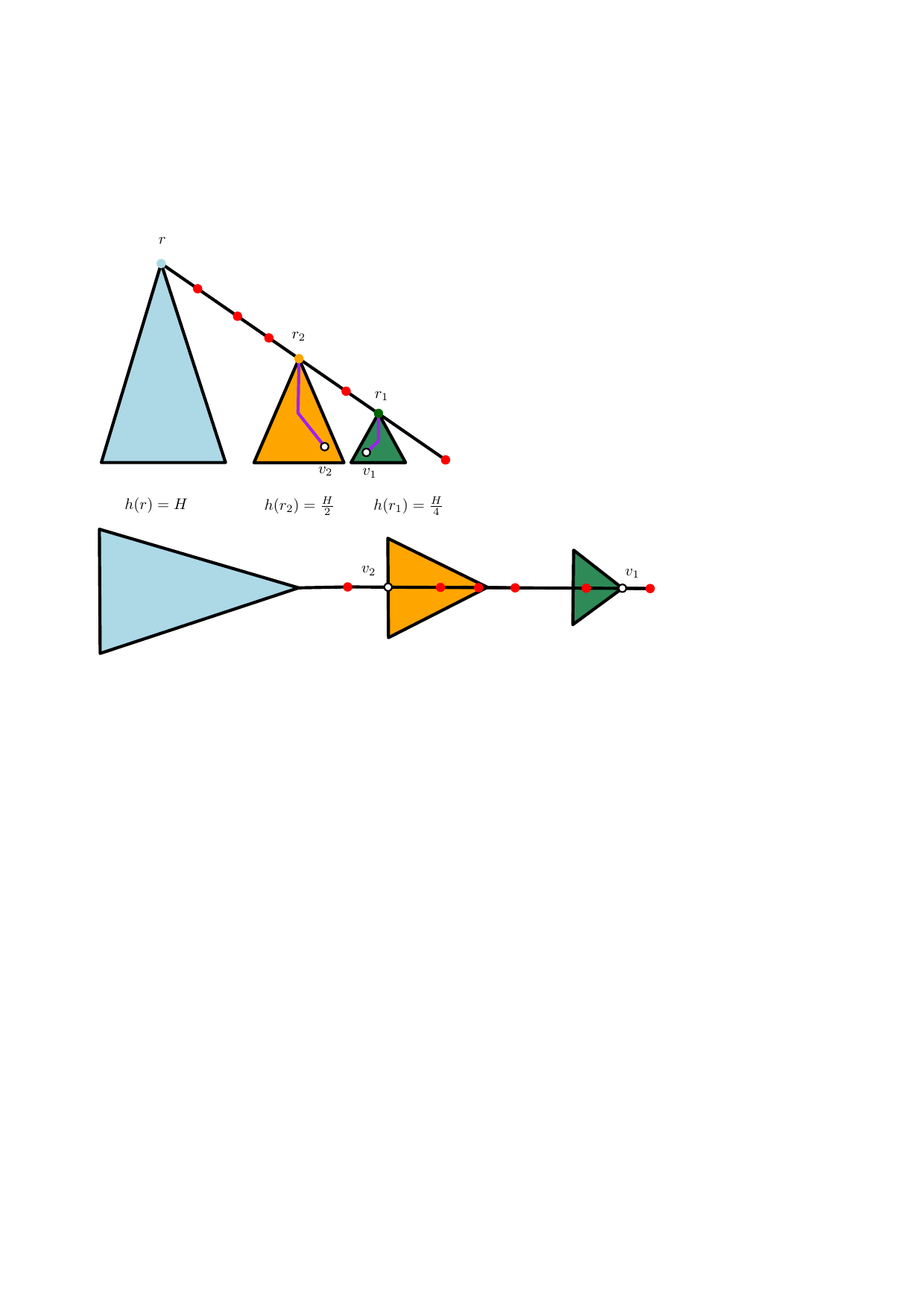}
    \caption{Proof of \cref{lemma:number} illustrated. When $i = 1$, $S_1$ is the set of all subtrees with height  $h(r_s) \in I \coloneq [\frac{H}{4}, \frac{H}{2}] $. In this figure, it contains the two  subtrees on the right. We want to upper bound  the total number of vertices in these subtrees.}
    \label{fig:lemma:number}
\end{figure}
\begin{lemma}\label{lemma:number}
    The number of subtrees that have at least one leaf at the widest layer of the tree $T$ is $O(\log n)$.
\end{lemma}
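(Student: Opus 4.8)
The plan is to classify the relevant subtrees by their height into geometrically-spaced groups and to bound the number of subtrees in each group by a constant, using the local density lower bound on bandwidth. Since the reduced tree $T$ is obtained from $G$ by deleting edges and vertices, deleting edges can only enlarge distances and shrink balls, so the local density of $T$ is at most that of $G$; hence $|N(v,d)| \le 2d\,\bw(G) = O(d)$ for every vertex $v$ and radius $d$, because $\bw(G)$ is a constant. Let $H$ denote the depth of the widest layer, which after our reduction is also the deepest layer, so every surviving leaf lies at depth $\le H$ and a subtree ``has a leaf at the widest layer'' exactly when it reaches depth $H$. Since $H \le n$, it suffices to exhibit $O(\log n)$ such subtrees, which I will obtain from a constant bound per group times $O(\log n)$ groups.

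First I would partition the subtrees reaching depth $H$ by their height $h(r_s)$ into the groups $S_i = \{\,\text{subtrees with } h(r_s) \in (H/2^{i+1},\, H/2^i]\,\}$, for $i = 0, 1, \dots, \lceil \log H \rceil$, matching the decomposition in \cref{fig:lemma:number}; there are $O(\log n)$ such groups, so it remains to show $|S_i| = O(1)$. I would handle the left and right spines separately, since a subtree on the left spine and one on the right spine can be at tree-distance roughly $2H$ and so cannot both be captured by a small ball. Within one spine, a subtree of $S_i$ reaching depth $H$ has its root at depth $H - h(r_s) \in [H - H/2^i,\, H - H/2^{i+1})$; because two vertices on a single spine are at tree-distance equal to their depth difference, all roots of $S_i$ on that spine lie within tree-distance $H/2^{i+1}$ of one another.

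The heart of the argument is then one application of local density. Fixing any root $r_{s_0}\in S_i$ on that spine as a center $c$, every vertex of every subtree of $S_i$ on that spine lies within distance $d := H/2^{i+1} + H/2^i = O(H/2^i)$ of $c$: walk at most $H/2^{i+1}$ edges along the spine to the relevant subtree root, then at most $h(r_s) \le H/2^i$ edges down into the subtree. Thus all these vertices lie in $N(c,d)\cup\{c\}$, which has size $O(d) = O(H/2^i)$. On the other hand, each subtree of $S_i$ contributes at least $h(r_s) > H/2^{i+1}$ distinct non-root vertices, and distinct subtrees share no non-root vertices. Dividing the available $O(H/2^i)$ vertices by the $\Omega(H/2^i)$ vertices each subtree must use gives $|S_i| = O(1)$ per spine, hence $O(1)$ overall, with the hidden constant depending only on $\bw(G)$. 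Summing over the $O(\log n)$ groups yields the claimed $O(\log n)$ bound.

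The main obstacle — and the reason both for splitting the two spines and for the geometric grouping — is forcing the two quantities in the final ratio to scale identically in $i$. A single ball around a fixed-depth vertex would either miss the subtrees on the opposite spine or have a radius too large to give a constant bound for the shallow, short subtrees. Grouping by height is precisely what makes the ball radius $O(H/2^i)$ and the per-subtree vertex count $\Omega(H/2^i)$ cancel, producing a bound independent of $i$.
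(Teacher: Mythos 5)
Your proof is correct and takes essentially the same approach as the paper's: partition the subtrees reaching the widest layer into $O(\log n)$ geometric height classes and show each class has $O(1)$ members by a packing argument, using that each such subtree contributes $\Omega(H/2^{i})$ pairwise-disjoint vertices which must all lie close together when bandwidth is bounded. The only difference is cosmetic: you invoke the local density lower bound from the preliminaries to bound the size of a ball around a subtree root, whereas the paper counts the layout positions between the leftmost and rightmost subtree vertices directly in the bandwidth-optimal linear layout---two equivalent packagings of the same counting step.
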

\begin{proof}
Because of symmetry, we only bound the number of subtrees whose root is on the right spine.   Suppose the tree $T$ rooted at $r$ has height $H$. For integer $i = 0, 1, \dots, \lfloor \log H \rfloor$, let $S_i$ be the set of all subtrees with height  $h(r_s) \in I \coloneq [\frac{H}{2 ^ {i + 1}}, \frac{H}{2 ^ {i}}] $ where $r_s$ is the root of a subtree. See \cref{fig:lemma:number}. We do not need to consider subtrees with height 0. 

Now we upper bound the total number of vertices in all subtrees in  $S_i$.
Consider the position of these subtrees in the linear layout. Suppose the rightmost vertex is $v_1$ and it belongs to the subtree rooted at $r_1$, and the leftmost vertex $v_2$ belongs to the subtree rooted at $r_2$. Recall that $d$ denotes the shortest path distance in the graph. For any root $r_s$ of a subtree in $ S_i$, $ d(r, r_s) = H - h(r_s) \in H - I \coloneq [H -\frac{H}{2 ^ {i}}, H - \frac{H}{2 ^ {i + 1}}]$. The shortest paths from $r$ to all such root vertices are part of the right spine. Thus,
\begin{align*}
d(r_1, r_2) &\leq  |H - I| = |I|  \\
    d(v_1, v_2) &\leq d(v_1, r_1) + d(r_1, r_2) + d(r_2, v_2) 
    \leq h(v_1) + |I| + h(v_2)  \\
  &\leq \frac{H}{2^i} + \left(\frac{H}{2^i} - \frac{H}{2^{i+1}}\right) + \frac{H}{2^i} 
    =  \frac{3H}{2^i} - \frac{H}{2^{i+1}} 
    = \frac{5H}{2^{i+1}}.
\end{align*}
Because of the bounded bandwidth, the number of vertices between $v_1$ and $v_2$ in the linear layout is at most $\frac{5H}{2^{i+1}} (\bw) $. 

Each subtree with height in $I$ must have at least $\frac{H}{2 ^ {i + 1}}$ vertices to reach the minimum height. Consequently, the total number of subtrees with height in any given interval $I$ is at most a constant and this completes the proof.
\end{proof}

Next, we prove that removing spine vertices from the linear layout reduces the bandwidth of subtrees. Since the roots of these subtrees also belong to the spine, each subtree may decompose into a forest. Given that bounded bandwidth implies bounded maximum degree, the number of trees in each resulting forest is bounded. If these trees have smaller bandwidth, we can apply the inductive hypothesis to show that they have width $O\bigl(\log^{k - 1}(n)\bigr)$. Thus, the original subtree also has width $O\bigl(\log^{k - 1}(n)\bigr)$. 
Thus,
the following lemma is essential to the induction:

\begin{lemma}
    If the tree $T$ has bandwidth $\bw$, then each tree in the forests obtained after removing the spine has bandwidth at most $\bw - 1$.
\end{lemma}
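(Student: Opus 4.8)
The plan is to argue entirely within the fixed linear layout of $T$ of bandwidth $\bw$, using the fact that the spine is a single path running from the leftmost vertex of the layout (position $1$) to the rightmost vertex (position $n$) through the root $r$. Fix one tree $T_c$ in the forest obtained by deleting the spine; it is rooted at a child $c$ of some spine vertex, and all of its vertices are non-spine and hence occupy positions in $\{2,\dots,n-1\}$, since positions $1$ and $n$ hold the two spine endpoints. In the induced layout of $T_c$, the length of an edge $\{u,w\}$ equals $1$ plus the number of $T_c$-vertices strictly between $u$ and $w$, so it is at most the original length of $\{u,w\}$ in $T$, and the two differ by exactly the number of deleted vertices lying between $u$ and $w$. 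Consequently it suffices to show that every edge of $T_c$ whose original length equals the maximum value $\bw$ has at least one spine vertex strictly between its endpoints; any edge of original length below $\bw$ already has induced length at most $\bw-1$.

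I expect the crux of the argument to be the following gap bound on the spine: after sorting the spine vertices by their positions in the layout, every gap between consecutive spine positions is at most $\bw$. To prove it I would traverse the spine as a single path $P$ from the position-$1$ vertex to the position-$n$ vertex; consecutive vertices of $P$ are adjacent in $T$, so their positions differ by at most $\bw$. If some consecutive pair of occupied spine positions $q_i<q_{i+1}$ had $q_{i+1}-q_i>\bw$, then $P$, which starts at position $1\le q_i$, ends at position $n\ge q_{i+1}$, and never lands strictly inside $(q_i,q_{i+1})$, would have to contain a single step from a vertex at position $\le q_i$ to one at position $\ge q_{i+1}$, a jump of size greater than $\bw$; this contradicts the bandwidth bound.

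Granting the gap bound, the lemma follows quickly. Let $\{u,w\}$ be an edge of $T_c$ of original length exactly $\bw$, so the positions of $u$ and $w$ differ by $\bw$. Because $u$ and $w$ are non-spine vertices, there is a spine vertex below them (at position $1$) and one above them (at position $n$). If no spine vertex lay strictly between $u$ and $w$, then the spine positions immediately below $u$ and immediately above $w$ would be consecutive with gap strictly greater than $\bw$, contradicting the gap bound. Hence some spine vertex occurs strictly between $u$ and $w$, so the induced length of $\{u,w\}$ within $T_c$ is at most $\bw-1$. Together with the trivial bound for the shorter edges, this shows that every edge of $T_c$ has induced length at most $\bw-1$, that is, $T_c$ has bandwidth at most $\bw-1$, as claimed.
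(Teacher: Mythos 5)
Your proof is correct and follows essentially the same approach as the paper's: both show that every edge of length exactly $\bw$ must have a spine vertex strictly between its endpoints, by observing that otherwise some step of the spine path from the leftmost to the rightmost position would have to jump across the gap with length exceeding $\bw$. Your ``gap bound'' on consecutive spine positions is just a slightly more formalized packaging of the paper's observation that a spine edge would have to enclose the saturated edge's endpoints.
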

\begin{proof}
The proof outline is as follows. By definition, the extreme points of the original linear layout are spine vertices and thus subtrees with one vertex. We denote them as $v_l$ and $v_r$ respectively. Consider any subtree $T_v$ rooted at $v \neq v_l, v_r$. After removing the spine vertices, the linear layout for $T$ naturally defines a linear layout for  $T_v \setminus v$  by preserving the relative position of its vertices. We show that the maximum length of any edge of $T_v \setminus v$  will become less than $\bw$.  

In the linear layout for $T$, the length of any edge is at most $\bw$.  If an edge has length equal to $\bw$, we say that this edge is saturated. Let $S$ be the set of all saturated edges of $T_v \setminus v$. If this is an empty set, the proof is complete.
Otherwise, suppose there is one edge $e \in S$ with no spine vertex inside it.  $v_l$ and $v_r$ are connected through edges of the spine and one of these edges must enclose the two endpoints of $e$. The length of $e$ is $\bw$, so the length of this edge is greater than $\bw$, and this is a contradiction. Thus, there is a spine vertex inside every saturated edge of $T_v \setminus v$.   After removing the spine vertices, all edges of $T_v \setminus v$ have length at most $\bw - 1$. $T_v \setminus v$ may be a forest, so this upper bounds the bandwidth of each connected component.
\end{proof}

With these lemmas, we can apply strong induction on bandwidth, which gives
us:
\begin{theorem}[Upper bound on BFS width]
\label{theorem:upper}
For any constant, $k\ge 1$, and $n$-vertex graph, $G$, 
if $\bw(G) = k$, then $\bfsw(G) = O\bigl(\log^{k - 1}(n)\bigr)$.
\end{theorem}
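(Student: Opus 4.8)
The plan is to prove the statement directly by strong induction on the bandwidth, using the slightly cleaner inductive hypothesis that every $n$-vertex graph with $\bw \le k$ has $\bfsw = O(\log^{k-1} n)$ (the theorem's hypothesis $\bw(G)=k$ is a special case). The base case $k=1$ is immediate: a connected graph of bandwidth $1$ is a path, whose every BFS layer has at most two vertices, so $\bfsw = O(1) = O(\log^{0} n)$. For the inductive step I would take a graph $G$ with $\bw(G) \le k+1$, fix a starting vertex whose BFS tree $T$ realizes $\bfsw(G)$, and bound the width of $T$. Following the normalizing reductions already set up, I delete the non-tree edges together with everything strictly below the first widest layer (changing neither the width nor the vertex count we measure against), and I apply the two flip operations so that the extreme points of the minimum-bandwidth linear layout become leaves of $T$. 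This lets me speak of a left and a right spine running from the root to the two extreme leaves, with every remaining vertex lying in a subtree hanging off the spine and, crucially, every vertex of the widest layer being a leaf.

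The two supporting lemmas then do the heavy lifting. First, \cref{lemma:number} guarantees that at most $O(\log n)$ subtrees contribute a leaf to the widest layer; its proof groups the subtrees by height into $O(\log n)$ geometric bands and uses the bounded-bandwidth packing bound to show each band holds only $O(1)$ subtrees. Second, the bandwidth-reduction lemma shows that deleting the spine leaves, within each subtree, a forest whose every component has bandwidth at most $\bw - 1 \le k$. Because bounded bandwidth forces bounded degree, the number of components in one such forest is $O(1)$. I then apply the inductive hypothesis to each component: the layering that $T$ induces on a component (distance from the global root) coincides with a genuine BFS layering of that component from its topmost vertex, so its contribution to any single layer is at most its own BFS width, namely $O(\log^{k-1} n)$. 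Summing over the $O(1)$ components and adding back the single spine root, each subtree contributes at most $O(\log^{k-1} n)$ vertices to the widest layer.

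Combining the two bounds finishes the argument: the widest layer of $T$ receives at most two vertices from the spine itself, plus a contribution from each of the $O(\log n)$ subtrees meeting it, each contributing $O(\log^{k-1} n)$ vertices; hence the width of $T$, and therefore $\bfsw(G)$, is $O(\log n)\cdot O(\log^{k-1} n) = O(\log^{k} n)$, completing the induction.

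I expect the main obstacle to be the bookkeeping that justifies applying the inductive hypothesis to the subtrees. The hypothesis concerns a graph's intrinsic BFS width, whereas what I actually control is a subtree's contribution to one specific layer of $T$ under the $T$-induced layering. The delicate points are (i) that the subtree's attachment vertex is a spine vertex whose bandwidth may be the full $k+1$, forcing the detour through the bandwidth-reduction lemma and the forest decomposition before the hypothesis becomes applicable, and (ii) that the $T$-induced layering on each forest component really is a BFS layering of that component, so that its per-layer size is bounded by its BFS width rather than by something larger. Verifying that these identifications are legitimate, and that the constant hidden in the $O(1)$ number of components per subtree does not silently accumulate across the levels of the induction, is where the argument must be handled most carefully.
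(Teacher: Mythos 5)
Your proposal follows the paper's own argument essentially step for step: the same strong induction on bandwidth, the same normalization (deleting non-tree edges and all layers below the first widest one, then flipping so the extreme positions of the minimum-bandwidth layout become leaves), and the same two lemmas---\cref{lemma:number} giving $O(\log n)$ contributing subtrees and the spine-removal lemma reducing bandwidth by one---combined identically to yield $O(\log n)\cdot O\bigl(\log^{k-1}(n)\bigr)=O\bigl(\log^{k}(n)\bigr)$. Your explicit check that the layering induced by $T$ on each forest component agrees with a genuine BFS layering of that component from its top vertex is a detail the paper leaves implicit, but it does not alter the route.
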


\section{Applications}
In this section, we prove the application results claimed in the introduction.

\subsection{Numerical Linear Algebra}

For the purposes of this section, the only facts we need about the Cuthill–McKee and reverse Cuthill–McKee algorithms are:
\begin{itemize}
\item The bandwidth of a reordered matrix $PAP^T$, where $A$ is any symmetric matrix and $P$ is any permutation matrix, is the same as the bandwidth of the linear layout of the graph of nonzeros of $A$ obtained by laying out the vertices of this graph in the order given by permutation $P$.
\item Both the Cuthill–McKee and reverse Cuthill–McKee algorithms, applied to a matrix $A$ choose their orderings (interpreted as linear layouts of graphs) by the breadth-first layer of the graph of nonzeros of $A$, 
and then carefully permute the vertices within each layer.
\end{itemize}

We can directly relate the performance of these algorithms to BFS width:

\begin{theorem}
\label{thm:cm-bfsw}
Let $G$ be the graph of nonzeros of a symmetric matrix $A$, and let $v$ be an arbitrary vertex of $G$.
Then  the Cuthill–McKee and reverse Cuthill–McKee algorithms, using a breadth-first layering rooted at $v$,
produce layouts of bandwidth $\Theta\bigl(\bfsw(G,v)\bigr)$.
\end{theorem}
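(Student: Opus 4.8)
The plan is to prove matching linear upper and lower bounds on the bandwidth of the produced layout, each of size $\Theta\bigl(\bfsw(G,v)\bigr)$; the two-sided estimate then yields the claim. Throughout I write $s = \bfsw(G,v) = \max_i |L_i(v)|$ and let $L_i$ denote the $i$-th breadth-first layer rooted at $v$. The structural feature I will exploit is the one stated in the preliminaries: both algorithms place the layers consecutively and in order (all of $L_0$, then all of $L_1$, and so on), differing only in how they permute vertices within a single layer. A key part of the argument is that neither bound depends on this within-layer permutation.

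For the upper bound I would reuse the reasoning of \cref{thm:forward}. Since every graph edge joins two vertices whose layer numbers differ by at most one, and since the layout is monotone in layer number, each edge has both endpoints confined to two consecutive layers. The leftmost position of $L_i$ and the rightmost position of $L_{i+1}$ differ by at most $|L_i| + |L_{i+1}| - 1 \le 2s - 1$, while an edge internal to a single layer has length at most $s - 1$. Hence every edge has length at most $2s - 1$, giving bandwidth $O\bigl(\bfsw(G,v)\bigr)$ irrespective of the within-layer ordering.

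For the lower bound I would fix a widest layer $L_i$, so $|L_i| = s$; if $s \ge 2$ then $i \ge 1$ because $|L_0| = 1$. In the layout, $L_i$ occupies $s$ consecutive positions and all of $L_{i-1}$ lies strictly to its left. Let $w$ be the vertex occupying the rightmost of these $s$ positions. By the defining property of breadth-first layers, $w$ has a neighbor (its breadth-first parent) in $L_{i-1}$, and that neighbor sits at least $s$ positions to the left of $w$, so this single edge already has length at least $s$. Therefore the bandwidth is at least $\bfsw(G,v) = \Omega\bigl(\bfsw(G,v)\bigr)$; the degenerate case $s = 1$ forces $G$ to be a path of bandwidth $1$ and is handled trivially. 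Combined with the upper bound, this gives $\Theta\bigl(\bfsw(G,v)\bigr)$.

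Finally, the reverse Cuthill–McKee ordering is the Cuthill–McKee ordering read backwards, and reversing a linear layout preserves every pairwise gap and hence every edge length, so both algorithms produce layouts of identical bandwidth and the same estimate applies. I do not anticipate a serious technical obstacle: the whole argument rests on the single observation that a full widest layer is laid out contiguously with its parent layer entirely on one side. The one point requiring genuine care is verifying that \emph{both} bounds are insensitive to the within-layer permutation that these algorithms apply — which is precisely why the lower-bound argument tracks the rightmost \emph{occupant} of the widest layer rather than any particular named vertex, so that it holds no matter which permutation the heuristic chooses.
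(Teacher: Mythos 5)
Your proposal is correct and follows essentially the same argument as the paper: the upper bound comes from every edge spanning at most two consecutive layers (length at most $2\,\bfsw(G,v)-1$), and the lower bound comes from the extreme vertex of a widest layer together with its BFS parent in the previous layer, which must lie at least $\bfsw(G,v)$ positions away. Your extra care about the within-layer permutation, the $s=1$ case, and the reversal for reverse Cuthill--McKee only makes explicit what the paper's proof states more tersely.
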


\begin{proof}
Let $u$ be a vertex in the widest layer of the breadth-first layering of $G$, chosen among the vertices of this layer to be the one that the Cuthill–McKee or reverse Cuthill–McKee algorithm places farthest from the previous layer. Then the positions of $u$ and its BFS ancestor differ by at least  $\bfsw(G,v)$,
so the bandwidth of the layout is at least $\bfsw(G,v)$. Every edge in $G$ extends over at most two consecutive layers, and has endpoints whose positions differ by at most $2\bfsw(G,v)-1$, so the bandwidth of the layout is at most $2\bfsw(G,v)-1$.
\end{proof}

By combining \cref{thm:cm-bfsw} with \cref{theorem:upper}, we have:

\begin{theorem}
\label{thm:cut1}
If a symmetric matrix $A$ can be reordered to have bounded bandwidth, the Cuthill–McKee and reverse Cuthill–McKee algorithms will produce a reordered matrix $PAP^T$ of at most polylogarithmic bandwidth.
\end{theorem}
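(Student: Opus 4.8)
The plan is to obtain this statement as an essentially immediate composition of the two results just proved, \cref{thm:cm-bfsw} and \cref{theorem:upper}. First I would translate the matrix hypothesis into the graph language used throughout the section: let $G$ be the graph of nonzeros of $A$, and observe that ``$A$ can be reordered to have bounded bandwidth'' is exactly the statement that $\bw(G) \le b$ for some constant $b$. This is because, by the first fact recorded about the algorithms, the bandwidth of any reordering $PAP^T$ equals the bandwidth of the corresponding linear layout of $G$, so minimizing over all permutations $P$ gives precisely $\bw(G)$.

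Next I would invoke \cref{theorem:upper} with $k = \bw(G) \le b$: since the bandwidth is a constant, we get $\bfsw(G) = O\bigl(\log^{\bw(G)-1} n\bigr) = O\bigl(\log^{b-1} n\bigr)$, a polylogarithmic bound on the BFS width taken over \emph{every} choice of root. The one point deserving a moment of care is that the Cuthill--McKee and reverse Cuthill--McKee heuristics each fix a specific starting vertex $v$ (typically a low-degree, near-peripheral one), and this choice is not under our control. However, it is irrelevant to the bound: by the definition $\bfsw(G) = \max_{v \in V} \bfsw(G,v)$, we have $\bfsw(G,v) \le \bfsw(G) = O\bigl(\log^{b-1} n\bigr)$ no matter which root $v$ the heuristic selects.

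Finally I would apply \cref{thm:cm-bfsw}, which asserts that the layout produced from root $v$ has bandwidth $\Theta\bigl(\bfsw(G,v)\bigr)$, and hence $O\bigl(\log^{b-1} n\bigr)$ by the previous step. Since the bandwidth of the reordered matrix $PAP^T$ equals the bandwidth of this layout of $G$, the reordered matrix has at most polylogarithmic bandwidth, as claimed.

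I do not expect a genuine obstacle here, as all the substance already resides in \cref{thm:cm-bfsw} and \cref{theorem:upper}, and this theorem is their direct corollary. The only step requiring deliberate attention is the root-independence argument in the second paragraph, namely ensuring that the heuristic's particular (and possibly adversarial-looking) root choice cannot escape the global upper bound; this follows immediately from $\bfsw(G)$ being defined as a maximum over all roots.
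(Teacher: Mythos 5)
Your proof is correct and follows exactly the paper's route: the paper derives \cref{thm:cut1} precisely by combining \cref{thm:cm-bfsw} with \cref{theorem:upper}, just as you do. Your explicit note that $\bfsw(G,v)\le\bfsw(G)$ handles the heuristic's root choice is a worthwhile clarification the paper leaves implicit, but it is the same argument.
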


Note that this theorem holds even if we do not know the optimal bandwidth
of the matrix $A$.
Moreover, the bound
of Theorem~\ref{thm:cut1} is
is tight, up to the dependence of the polylogarithmic bound on the bandwidth;
by combining \cref{thm:cm-bfsw} with \cref{theorem:lower-bfsw-min}, we have:

\begin{theorem}
\label{thm:cut2}
For any $k$, there exist $n\times n$ symmetric matrices $A$ of bounded bandwidth (for arbitrarily large $n$) for which the Cuthill–McKee and reverse Cuthill–McKee algorithms produce reordered matrices $PAP^T$ of bandwidth $\Omega(\log^k n)$.
\end{theorem}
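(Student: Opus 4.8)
The plan is to combine the two tools already assembled in the excerpt: Theorem~\ref{thm:cm-bfsw}, which ties the bandwidth produced by the (reverse) Cuthill--McKee algorithm to the BFS width $\bfsw(G,v)$ of the underlying graph rooted at the BFS start vertex $v$, and Theorem~\ref{theorem:lower-bfsw-min}, which exhibits constant-bandwidth trees whose \emph{minimum} BFS width is $\Omega(\log^k n)$. The subtlety is that Theorem~\ref{thm:cm-bfsw} lower-bounds the output bandwidth by $\bfsw(G,v)$ for whatever root $v$ the algorithm happens to use, whereas the practitioner (or adversary running the algorithm) may pick an advantageous root. So the lower bound must hold no matter which starting vertex is chosen, which is exactly what a bound on $\bfsw_{\min}$ gives us: since $\bfsw(G,v)\ge \bfsw_{\min}(G)$ for every $v$, a lower bound on the minimum transfers uniformly to all roots.

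Concretely, first I would fix an arbitrary target exponent $k$ and invoke Theorem~\ref{theorem:lower-bfsw-min} to obtain, for arbitrarily large $n$, a tree $T$ on $n$ vertices with $\bw(T)=O_k(1)$ and $\bfsw_{\min}(T)=\Omega(\log^k n)$. Let $A$ be the symmetric adjacency matrix of $T$ (equivalently, any symmetric matrix whose graph of nonzeros is $T$); then $A$ is an $n\times n$ matrix of bounded bandwidth, and as $n$ ranges over the arbitrarily large values supplied by the construction we obtain the desired infinite family. Next I would observe that whatever start vertex $v$ the (reverse) Cuthill--McKee algorithm selects, Theorem~\ref{thm:cm-bfsw} gives that the reordering has bandwidth $\Theta(\bfsw(T,v))$, and in particular bandwidth at least $\bfsw(T,v)\ge \bfsw_{\min}(T)=\Omega(\log^k n)$. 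By the first bullet in the Numerical Linear Algebra application section, the bandwidth of the reordered matrix $PAP^T$ equals the bandwidth of this linear layout, so $PAP^T$ has bandwidth $\Omega(\log^k n)$, completing the argument.

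The main thing to get right is the quantifier on the root vertex: the statement must be robust against the algorithm choosing its BFS root cleverly, and the whole reason Theorem~\ref{theorem:lower-bfsw-min} (about $\bfsw_{\min}$) rather than Theorem~\ref{theorem:lower} (about $\bfsw$) is the correct ingredient is that $\bfsw_{\min}$ bounds the width from \emph{below} across \emph{all} possible roots simultaneously. A secondary point worth a sentence is that the theorem as stated fixes $k$ and then asserts a family of bad matrices of bounded bandwidth; since the bandwidth bound $O_k(1)$ depends on $k$ but is constant once $k$ is fixed, each such family genuinely has bounded (not merely polylogarithmic) optimal bandwidth, so the separation between the $O_k(1)$ optimum and the $\Omega(\log^k n)$ heuristic output is meaningful.
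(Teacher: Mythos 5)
Your proposal is correct and follows exactly the paper's own route: the paper proves this theorem precisely by combining \cref{thm:cm-bfsw} with \cref{theorem:lower-bfsw-min}, and your explanation of why the $\bfsw_{\min}$ lower bound (rather than \cref{theorem:lower}) is the right ingredient---so that the bound survives any choice of BFS root---is the same consideration implicit in the paper's citation of that theorem. Nothing is missing; your write-up is in fact more explicit than the paper's one-line justification.
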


Theorems~\ref{thm:cut1} and~\ref{thm:cut2} provide the first
non-trivial deterministic analyses for the Cuthill-McKee algorithm
and its reversal.

\subsection{Graph Reconstruction}

To reconstruct graphs of low bandwidth or low BFS width, we apply 
\cref{algorithm1}.
 
\begin{algorithm}[hbt]
\caption{Reconstructing graphs with low BFS width}
\label{algorithm1}
\begin{algorithmic}[1]
\State $v \leftarrow$ an arbitrary vertex in $V$    \Comment{this is the root of the BFS tree}
\For{$u \in V\setminus\{v\}$}
    \State Query$(v,u)$  \Comment{this is a distance query}
\EndFor
\State $\hat{E} \leftarrow$ $\bigl\{ \{a, b\} : \{a, b\} \subseteq V\setminus\{v\} \land |d(v, a) - d(v, b)| \leq 1\bigr\}$
\For{$\{a, b\} \in \hat{E}$}
    \State Query$(a, b)$
\EndFor
\State \Return $\bigl\{\{u,v\}: u\in V\land d(v,u)=1\bigr\}\cup \bigl\{\{a, b\}: \{a, b\}\in\hat{E}\land d(a,b)=1\bigr\}$
\end{algorithmic}
\end{algorithm}

Less formally, \cref{algorithm1} first picks a vertex $v$ as the root of the BFS tree. Then, it obtains distances from this root to every other vertex to determine their layer number. It constructs the set of all vertex pairs from the same layer or in two consecutive layers, and queries each such pair. Recall that in the proof of \cref{thm:forward}, non-tree edges connect two vertices whose layer numbers differ by at most one, so each pair of adjacent vertices in $G$ must belong to the set of queried pairs. The algorithm then returns the set of all pairs for which a query found two vertices to be at distance one.
\cite{bastide_optimal_2024} used a similar approach to reconstruct trees and $k$-chordal graphs. They divided the graph into BFS layers and reconstructed edges  in the same layer and in two consecutive layers, but querying each pair would incur a high query complexity for those graph classes.

\begin{theorem}
\label{thm:reconstruct}
For any graph $G$ with $n$ vertices and BFS width $B$,  \cref{algorithm1} reconstructs $G$ with deterministic query complexity $O(nB)$.
\end{theorem}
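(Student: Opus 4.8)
The plan is to bound the total number of distance queries made by \cref{algorithm1} and argue that the returned edge set is exactly $E$. The algorithm makes two batches of queries: first, $n-1$ queries of the form $\mathrm{Query}(v,u)$ to compute every layer number $d(v,u)$; second, one query for each unordered pair $\{a,b\}$ placed into $\hat{E}$, namely each pair of vertices lying in the same layer or in two consecutive layers. The first batch contributes $O(n)$ queries, which is dominated by $O(nB)$, so the crux is bounding $|\hat{E}|$.

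To bound $|\hat{E}|$, I would observe that every pair in $\hat{E}$ consists of two vertices $a,b$ with $|L_i(v)|$-membership in layers that are equal or adjacent. Group the vertices by layer: layer $L_i(v)$ has at most $B$ vertices by definition of BFS width $\bfsw(G,v)\le B$. A pair $\{a,b\}\in\hat{E}$ either lies within a single layer $L_i(v)$, contributing at most $\binom{|L_i(v)|}{2}=O(B^2)$ pairs per layer, or straddles two consecutive layers $L_i(v),L_{i+1}(v)$, contributing at most $|L_i(v)|\cdot|L_{i+1}(v)|=O(B^2)$ pairs per adjacent pair of layers. Since there are at most $n$ nonempty layers, and each layer participates in $O(1)$ of these groupings (itself and its neighbor above), summing over all layers gives $|\hat{E}| = O(nB^2)$. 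This would yield a total of $O(nB^2)$ queries, which is off from the claimed $O(nB)$ by a factor of $B$.

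The main obstacle, therefore, is tightening the second batch from $O(nB^2)$ down to $O(nB)$. The resolution is that bounded BFS width forces bounded maximum degree, and hence the number of vertices within a bounded-distance neighborhood of any vertex is bounded. More directly, I would charge the count differently: for a fixed vertex $a\in L_i(v)$, the partner $b$ must lie in $L_{i-1}(v)\cup L_i(v)\cup L_{i+1}(v)$, a set of size at most $3B$; hence $a$ appears in at most $3B$ pairs. Summing over all $n$ vertices and dividing by $2$ (each pair counted twice) gives $|\hat{E}|\le \tfrac{3}{2}nB = O(nB)$. Combining the two batches, the total query complexity is $O(n)+O(nB)=O(nB)$.

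Finally, I would verify correctness of the output. Every edge $\{u,v\}$ incident to the root is recovered by the first batch, since such $u$ satisfies $d(v,u)=1$. For every non-root edge $\{a,b\}\in E$, the endpoints lie in the same or consecutive BFS layers (the standard BFS fact invoked in the proof of \cref{thm:forward}), so $\{a,b\}\in\hat{E}$ and the second-batch query $\mathrm{Query}(a,b)$ returns $1$, placing $\{a,b\}$ in the output. Conversely, any pair returned satisfies $d(a,b)=1$ and is therefore a genuine edge, so the returned set equals $E$ exactly. This establishes both the deterministic $O(nB)$ query bound and exact reconstruction.
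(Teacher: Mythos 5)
Your proof is correct and takes essentially the same approach as the paper: after the $n-1$ root queries, you charge each remaining vertex with at most $3B$ candidate partners (the paper counts $3B-1$, excluding the vertex itself) in its own and the two adjacent layers, then halve the double-count to get $O(nB)$ queries in total; the intermediate $O(nB^2)$ detour is harmless exposition. Your explicit verification that the output equals $E$ is handled in the paper by the prose preceding the theorem (via the BFS layering fact recalled from \cref{thm:forward}), so nothing essential differs.
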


\begin{proof}
It takes $n-1$ queries to find the distances from the chosen root $v$ and partition the vertices into layers.
Then, the algorithm makes at most $3B-1$ queries involving each remaining vertex, between it and at most $3B-1$ other vertices in its layer and the two adjacent layers. This double-counts the queries, so the total number of queries is at most $n-1+(n-1)(3B-1)/2=O(nB)$.
\end{proof}

\begin{corollary}
For any constant bound $b$ on the bandwidth of graphs, \cref{algorithm1} reconstructs $n$-vertex graphs of bandwidth $\le b$ with deterministic query complexity $\tilde{O}(n)$.
\end{corollary}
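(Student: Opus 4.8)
The plan is simply to chain the two results already established in the excerpt: the upper bound on BFS width in terms of bandwidth (\cref{theorem:upper}) and the query-complexity bound for \cref{algorithm1} in terms of BFS width (\cref{thm:reconstruct}). First I would fix an $n$-vertex graph $G$ with $\bw(G) \le b$ for the constant $b$. Writing $k = \bw(G) \le b$, \cref{theorem:upper} gives $\bfsw(G) = O\bigl(\log^{k-1} n\bigr)$. Since $k \le b$ and $\log^{k-1} n \le \log^{b-1} n$ for all sufficiently large $n$, this yields the uniform bound $\bfsw(G) = O\bigl(\log^{b-1} n\bigr)$, which is $\polylog(n)$ because $b$ is a fixed constant.

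Next I would feed this bound into \cref{thm:reconstruct}. Taking $B = \bfsw(G)$, that theorem guarantees that \cref{algorithm1} correctly reconstructs $G$ using $O(nB)$ queries deterministically. Substituting the BFS-width bound gives query complexity $O\bigl(n \log^{b-1} n\bigr)$, which is $\tilde{O}(n)$ by definition, since the surviving factor is polylogarithmic in $n$. Correctness and determinism are inherited directly from \cref{thm:reconstruct} and need no additional argument.

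One small point worth flagging is that \cref{algorithm1} chooses an arbitrary root $v$, whereas the query bound in \cref{thm:reconstruct} is phrased in terms of the global BFS width $\bfsw(G)$. This causes no difficulty: the queries incurred by the algorithm are governed by the width of the layering from $v$, namely $\bfsw(G,v)$, which is at most $\bfsw(G)$ for every choice of $v$, so the $O(nB)$ bound applies no matter which root is selected. Consequently there is no genuine obstacle here — the corollary is a direct composition of two prior theorems, and the only thing to verify carefully is the passage between the ``$\bw(G)=k$'' form of \cref{theorem:upper} and the ``$\bw(G)\le b$'' hypothesis of the corollary, which is handled by monotonicity of $\log^{k-1} n$ in the exponent $k-1$.
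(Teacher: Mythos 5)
Your proposal is correct and is essentially the paper's own argument: the corollary is proved there by directly composing \cref{thm:reconstruct} with \cref{theorem:upper}. Your extra care about the exponent monotonicity ($\log^{k-1} n \le \log^{b-1} n$) and about the arbitrary root (since $\bfsw(G,v) \le \bfsw(G)$ for every $v$) fills in details the paper leaves implicit, but does not change the approach.
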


\begin{proof}
This follows immediately from \cref{thm:reconstruct} and from the relation between bandwidth and BFS width of \cref{theorem:upper}.
\end{proof}

\subsection{Graph Drawing}
In our graph drawing application, we specifically
focus on arc diagrams, which were used to study the crossing number
problem \cite{masuda_crossing_1990} and visualize structure in
strings \cite{wattenberg_arc_2002}. 
In an arc diagram, vertices
of a graph are positioned along a straight (e.g., horizontal) line, 
and edges are
semicircles drawn on either side of the line \cite{masuda_crossing_1990},
joining pairs of points (possibly using straight-line segments for consecutive
points).
Examples are shown above in Figures~\ref{fig:arcs} and~\ref{fig:bw2-bfs-log}.  

To analyze such diagrams, we constrain the vertices to be placed at
consecutive integer coordinates. Thus, the horizontal line is a
linear layout of the vertices. The drawing fits within a bounding box $n-1$ units wide and $b/2$ units tall, where $n$ is the number of vertices in the graph and $b$ is the bandwidth of the given layout.

We immediately obtain the following result:

\begin{theorem}
Suppose that a given graph $G$ has $n$ vertices and has an arc diagram of bounded height $h$. Construct an arc diagram for $G$ by ordering the vertices by their breadth-first layer numbers, starting from an arbitrary root vertex. Then if $h$ is bounded by a constant, the height of the arc diagram that we construct will be bounded by $O\bigl(\polylog(n)\bigr)$.
\end{theorem}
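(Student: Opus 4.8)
**

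The plan is to reduce this graph drawing statement directly to the previously established connection between arc-diagram height, bandwidth, and BFS width, so that almost all the work is already done by \cref{theorem:upper} and the observations about arc diagrams. First I would make explicit the dictionary between the combinatorial and geometric objects: placing the $n$ vertices at consecutive integer coordinates along the horizontal line is exactly a linear layout of $G$, and (as noted in the preliminaries) the height of the arc diagram under this placement is half the bandwidth of that linear layout, since each edge $\{u,v\}$ is drawn as a semicircle whose radius is half the horizontal distance $|x_u - x_v|$, and the tallest semicircle corresponds to the longest edge, i.e.\ to the bandwidth. Thus ``$G$ has an arc diagram of height $h$'' means precisely that $G$ admits a linear layout of bandwidth $2h$, so $\bw(G) \le 2h$.

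Next I would invoke the hypothesis that $h$ is bounded by a constant: this gives $\bw(G) = O(1)$, say $\bw(G) \le 2h = b$ for the constant $b$. By \cref{theorem:upper}, a graph of bandwidth $b$ has BFS width $\bfsw(G) = O\bigl(\log^{b-1} n\bigr) = O\bigl(\polylog(n)\bigr)$.

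Then I would connect this back to the construction in the theorem statement, which builds the arc diagram by ordering the vertices according to their breadth-first layer numbers from an arbitrary root $v$. By the degree lower bound on bandwidth (the degree claim in the preliminaries), bounded bandwidth forces bounded maximum degree, so $G$ is connected with $O(1)$ maximum degree and a BFS layering is well defined. Ordering vertices by nondecreasing layer number is exactly the construction used in the proof of \cref{thm:forward}: since every edge of $G$ spans at most two consecutive layers, the resulting linear layout has bandwidth at most $2\bfsw(G,v)-1 \le 2\bfsw(G)-1$. Hence the height of the constructed arc diagram is at most $\bfsw(G)-\tfrac12 = O\bigl(\polylog(n)\bigr)$, which is the claimed bound.

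The only genuine subtlety — and the step I would be most careful about — is keeping the two different ``height'' quantities straight: the \emph{given} optimal arc diagram has height $h$ and we translate that into the bandwidth hypothesis $\bw(G)\le 2h$, whereas the \emph{constructed} arc diagram has its height controlled by $\bfsw(G)$ rather than by $h$ directly. The inequality flows in the right direction because \cref{theorem:upper} bounds $\bfsw(G)$ in terms of $\bw(G)$, and the breadth-first ordering realizes a layout whose bandwidth is within a factor of two of $\bfsw(G)$; no circularity arises. Everything else is bookkeeping, so I expect the proof to be short, essentially a one-line chain $h \text{ bounded} \Rightarrow \bw(G)=O(1) \Rightarrow \bfsw(G)=O(\polylog n) \Rightarrow$ constructed height $=O(\polylog n)$, with the arc-diagram/bandwidth dictionary supplying the first and last implications.
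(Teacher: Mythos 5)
Your proposal is correct and is essentially the paper's own argument: the paper states this theorem as an immediate consequence of the arc-diagram/bandwidth dictionary (height equals half the layout's bandwidth), \cref{theorem:upper}, and the BFS-layer-ordering bound from \cref{thm:forward}, which is exactly the chain $h$ bounded $\Rightarrow \bw(G)\le 2h \Rightarrow \bfsw(G)=O(\polylog n) \Rightarrow$ constructed height $\le \bfsw(G)$ that you spell out. One tiny quibble: bounded maximum degree does not imply connectivity, as your third paragraph suggests in passing --- connectivity is a standing assumption needed for the BFS layering to cover all of $G$, not a consequence of bounded bandwidth --- but this does not affect the substance of your argument.
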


\bibliography{references, bib2doi}
\end{document}